\documentclass[times,preprint]{elsarticle}
\usepackage{amsmath}
\usepackage{microtype}
\usepackage{amsthm}
\usepackage{amsfonts}
\usepackage{amssymb}
\usepackage{graphicx,psfrag,epsf}
\graphicspath{{figures/}}
\usepackage{natbib}
\usepackage{booktabs}
\usepackage{url}
\usepackage{xcolor}
\usepackage[linesnumbered, ruled, vlined]{algorithm2e}
\usepackage{placeins} %

\theoremstyle{plain}
\newtheorem{theorem}{Theorem}
\newtheorem{lemma}{Lemma}
\newtheorem{corollary}{Corollary}
\theoremstyle{remark}
\newtheorem{remark}{Remark}

\DeclareMathOperator{\Var}{Var}

\DeclareMathOperator{\diag}{diag}

\newcommand{\T}{\mathsf{T}}
\newcommand{\bs}[1]{\boldsymbol{#1}}

\newcommand{\muZ}{\mu_{Z|XY}}
\newcommand{\sigZ}{\sigma_{Z|XY}}

\newif\ifDRAFT \DRAFTfalse   %
\ifDRAFT
  \usepackage[textsize=footnotesize,colorinlistoftodos]{todonotes}
\else
  \usepackage[disable]{todonotes}
\fi

\newcommand{\CAPNOTE}[1]{\ifDRAFT~\textcolor{blue!70!black}{\textbf{[NOTE: #1]}}\fi}

\newcommand{\maeProposed}{3.1 \times 10^{-5}}

\newcommand{\maeDelta}{0.1717}

\newcommand{\maxDeltaZeroMean}{0.45}
\newcommand{\maxDeltaNonZeroMean}{0.2030}
\newcommand{\deltaNZMReduction}{reduced that maximum by more than half}

\newcommand{\timeEvalLaptopLo}{0.28}
\newcommand{\timeEvalLaptopHi}{0.35}
\newcommand{\timeCILaptopLo}{8}
\newcommand{\timeCILaptopHi}{15}
\newcommand{\nEvalCILaptopLo}{24}
\newcommand{\nEvalCILaptopHi}{29}
\newcommand{\timeBootstrapEst}{90--240}

\newcommand{\covCells}{42}
\newcommand{\covB}{2000}
\newcommand{\covMCSE}{0.005}
\newcommand{\covTImuZeroLo}{0.990}
\newcommand{\covTImuZeroHi}{1.000}
\newcommand{\covElLo}{0.993}
\newcommand{\covElHi}{1.000}
\newcommand{\covElMuZeroLo}{0.999}
\newcommand{\covElMuZeroHi}{1.000}
\newcommand{\covCCWCmax}{0.0005}
\newcommand{\covTIminAll}{0.942}
\newcommand{\covTIminAllCPlo}{0.930}
\newcommand{\covTIminAllCPhi}{0.951}
\newcommand{\covTInbLo}{0.943}
\newcommand{\covTInbHi}{0.955}
\newcommand{\covDeltaNBfifty}{0.857--0.860}

\newcommand{\covDeltaNBthousand}{0.947}

\newcommand{\covMCTIcovDiff}{0.003}
\newcommand{\covMCTIwidthPct}{0.2}
\newcommand{\covMCTIscorePct}{0.3}
\newcommand{\covFailMaxPct}{0.05}
\newcommand{\covCoreHours}{295}

\usepackage{hyperref}
\hypersetup{colorlinks=true, citecolor=blue!60!black,
            linkcolor=blue!50!black, urlcolor=blue!60!black}

\begin{document}
\setlength{\emergencystretch}{3em}  %
\hbadness=3600  %

\begin{frontmatter}
    \title{Dimension Reduction for the Trivariate Normal Product Distribution: An Efficient Numerical Algorithm via Conditional Expectation}

    \author[unm]{Davood Tofighi\corref{cor1}}
    \ead{dtofighi@unm.edu}
    \address[unm]{Department of Mathematics and Statistics, University of New Mexico, Albuquerque, NM 87131, USA}
    \cortext[cor1]{Corresponding author}

    \begin{abstract}
        Statistical inference in sequential mediation models requires evaluating the cumulative distribution function (CDF) of the product of three normal coefficient estimators, an operation that entails integrating over a non-convex region with boundary $xyz=v$. While the Delta method provides an instantaneous first-order approximation, its asymptotic variance collapses whenever two or more path coefficients approach zero, producing severe undercoverage near parameter boundaries. Nonparametric bootstrapping avoids gradient collapse but incurs an $\mathcal{O}(B \times N)$ computational cost that becomes burdensome in large-scale simulation studies or iterative power analyses. We propose a model-based dimension-reduction algorithm that integrates out the third variable analytically under the trivariate Gaussian distribution for arbitrary mean vectors and positive-definite covariance matrices. Partitioning the $xy$-plane into quadrants isolates the sign change at the coordinate axes, reducing the problem to an adaptive two-dimensional quadrature whose fixed-grid cost is $\mathcal{O}(n^2)$ in place of $\mathcal{O}(n^3)$. In simulation benchmarks across six parameter regimes, the algorithm achieves a mean absolute error of $\maeProposed$ relative to a $10^8$-sample Monte Carlo reference, evaluating the distribution in under one second and yielding plug-in quantile confidence intervals whose empirical coverage was near or above nominal.
    \end{abstract}

    \begin{keyword}
        Numerical integration \sep Multivariate normal distribution \sep Product distribution \sep Dimension reduction \sep Sequential mediation \sep Adaptive quadrature
    \end{keyword}
\end{frontmatter}

\ifDRAFT
    \listoftodos[Pending items \& TODOs (draft only --- set \texttt{\textbackslash DRAFTfalse} to remove)]
\fi

\section{Introduction} \label{sec:intro}

Causal mediation analysis investigates the intermediate pathways through which an exposure transmits its effect to an outcome, decomposing the total causal effect into direct and indirect components \citep{robins1992, pearl2001, mackinnon2008, vanderweele2014c}. In a single-mediator model, such as the JOBS II intervention where assignment to a job-skills workshop ($A$) reduces depressive symptoms ($Y$) by increasing job-search self-efficacy ($M_1$) \citep{vinokur1997}, the indirect effect under linear additive structural models equals the product of two path coefficients \citep{mackinnon2008}.\footnote{For simplicity, we omit baseline covariates; when present, they render each coefficient a partial (covariate-adjusted) effect without altering the underlying product structure.} When mediation operates through a sequential chain of two intermediate variables ($A \rightarrow M_1 \rightarrow M_2 \rightarrow Y$), the workshop first elevates self-efficacy ($M_1$), self-efficacy subsequently enhances job-search behavior ($M_2$), and job-search behavior ultimately reduces depressive symptoms ($Y$). In this two-mediator chain, the indirect effect estimand is the three-path product $\theta = \beta_1 \beta_2 \beta_3$, estimated by the product of the corresponding coefficient estimators $\hat\theta = \hat\beta_1 \hat\beta_2 \hat\beta_3$. Under standard regularity conditions, ordinary least-squares and maximum-likelihood estimators are asymptotically multivariate normal. Statistical inference regarding $\theta$ (including hypothesis tests, confidence intervals, and power calculations) therefore requires the sampling distribution of the product of three correlated normal random variables.

For the product of two normal random variables, the distributional theory is well understood. \citet{pearson1929} derived the distribution of the sample product-moment coefficient, while \citet{wishart1932} obtained the density of the product of two zero-mean normal variables in terms of the modified Bessel function of the second kind. \citet{craig1936} extended this formulation to non-zero means using a Laurent series expansion. Subsequent work characterized products of independent normal variables \citep{springer1966} as well as the product moments of a general $K$-fold multivariate normal distribution \citep{goodman1960, goodman1962, bohrnstedt1969}. Because low-order moments do not determine tail quantiles, attention shifted to exact distribution functions. \citet{meeker1994} developed numerical quadrature routines for the bivariate product distribution. \citet{cui2016} later derived an exact analytic density for correlated normal variables with arbitrary means as an infinite series of modified Bessel functions, and \citet{gaunt2025} developed asymptotic approximations for tail quantiles. For single-mediator models, confidence intervals based on the distribution of the product of two normal variables are standard practice \citep{mackinnon2008, tofighi2011}.

For products of three or more correlated normal variables, no closed-form or series representation free of residual integration is available. \citet{ogasawara2025} expressed the density of a product of $K$ correlated zero-mean normal variables as a $(K-1)$-fold integral, as well as a $(K-2)$-fold integral over an infinite double Bessel series. At $K = 3$, both representations retain residual integration, leaving an analytic expression an open problem. Consequently, for the sequential indirect effect estimator $\hat\theta = \hat\beta_1 \hat\beta_2 \hat\beta_3$ in a two-mediator chain \citep{vanderweele2014c, vanderweele2015}, the cumulative distribution function (CDF) has no known closed form free of residual integration. The central obstacle is geometric: the product threshold defines a non-convex integration region with boundary $xyz=v$. Solving this boundary for $z$ gives the sign-dependent expression $v/(xy)$, which is undefined when $x=0$ or $y=0$ and complicates general-purpose multidimensional cubature \citep{berntsen1991}.

In the absence of an exact distribution function, applied researchers rely on approximations or resampling methods. The Delta method linearizes the product, but its first-order variance collapses to zero whenever any two path coefficients approach zero (Lemma~\ref{lem:delta}, \ref{app:delta}), producing severe power loss and boundary undercoverage in small-to-moderate samples \citep{lomnicki1967, mackinnon2002}. In practice, the plug-in Wald statistic is bounded by the smallest individually standardized coefficient, preventing rejection of the null hypothesis unless every path is individually significant (Remark~\ref{rem:plugin}, \ref{app:delta}; Section~\ref{subsec:coverage}). Moment-matching approximations \citep{zheng2012} calibrate only low-order moments, failing in the extreme tails that govern $p$-values and coverage. Nonparametric bootstrapping \citep{preacher2008} avoids distributional assumptions but relies on asymptotic justification \citep{bickel1981}, degrades near parameter boundaries \citep{andrews2000}, and incurs an $\mathcal{O}(B \times N)$ computational cost that becomes burdensome in large-scale simulation studies or iterative power analyses.

We propose a dimension-reduction algorithm for computing the CDF of the sequential indirect effect estimator $\hat\theta = \hat\beta_1 \hat\beta_2 \hat\beta_3$ under joint normality. Rather than introducing a distributional approximation or relying on Monte Carlo simulation, we integrate one variable analytically by conditioning on the remaining two. This conditional Gaussian structure expresses the innermost integral as a standard normal tail probability, leaving a bivariate integrand that is continuous on each open quadrant. For comparable fixed tensor-product quadrature rules, this reduction changes the grid cost from $\mathcal{O}(n^3)$ to $\mathcal{O}(n^2)$; adaptive quadrature cost is instead measured by the observed number of integrand evaluations at a stated tolerance. The implemented algorithm achieves a mean absolute error of $\maeProposed$ relative to a $10^8$-sample Monte Carlo reference benchmark across diverse covariance structures (Section~\ref{sec:simulation}) with evaluation times of \timeEvalLaptopLo--\timeEvalLaptopHi{} seconds on a standard laptop (Section~\ref{subsec:software}).

\subsection{Justification and Rationale}

The central computational difficulty is the geometry of the integration region. Evaluating $P(XYZ < v)$ requires integrating the trivariate Gaussian density over a non-convex region whose boundary is a level set, that is, the set on which the product equals the threshold; away from $x=0$ and $y=0$, this boundary is represented by $z = v/(xy)$. Unlike the rectangular integration domains encountered in standard multivariate normal probability routines \citep{drezner1994, genz2009}, the integration limit for one variable depends on the values and signs of the other two. A direct three-dimensional tensor-product quadrature rule using $n$ nodes per coordinate requires $\mathcal{O}(n^3)$ integrand evaluations. For adaptive three-dimensional quadrature, computational work is instead reported as the observed number of integrand evaluations at a stated tolerance because it depends on the refinement strategy and integrand geometry.

Conditioning on $(X, Y)$ yields three computational consequences:
\begin{enumerate}
    \item \textbf{Dimension reduction:} Analytically marginalizing out $Z$ conditional on $(X, Y)$ leaves a two-dimensional residual integration and, for comparable fixed tensor-product quadrature rules, reduces grid cost from $\mathcal{O}(n^3)$ to $\mathcal{O}(n^2)$.
    \item \textbf{Evaluation of the inner probability:} The conditional distribution $Z \mid X, Y \sim \mathcal{N}(\muZ, \sigZ^2)$ allows the inner conditional probability to be expressed in terms of the univariate normal CDF $\Phi(\cdot)$; the sign of the product of the conditioning values determines whether this probability is a lower- or upper-tail probability.
    \item \textbf{Model-implied distributional target:} Unlike the Delta method, which linearizes the product, the proposed formulation introduces no structural approximation to the product distribution under the assumed joint trivariate normal model. The residual integration is evaluated numerically at a user-specified tolerance, although adaptive error control does not guarantee that the requested tolerance is achieved (Section~\ref{subsubsec:cubature}).
\end{enumerate}

The proposed algorithm computes the CDF directly for arbitrary mean vectors $\bs{\mu}$ and positive-definite covariance matrices $\bs{\Sigma}$, complementing the correlated zero-mean setting where existing results deliver only the density or characteristic function \citep{ogasawara2025}.

\section{Mathematical Preliminaries}
\label{sec:back}

Let $\bs{X} = [X, Y, Z]^{\T}$ be a random vector following a trivariate normal distribution with mean vector $\bs{\mu} = [\mu_X, \mu_Y, \mu_Z]^{\T}$ and positive-definite covariance matrix $\bs{\Sigma} \succ 0$:
\begin{equation}
    \bs{X} \sim \mathcal{N}_3(\bs{\mu}, \bs{\Sigma}), \quad \text{where } \bs{\Sigma} = \begin{bmatrix}
        \sigma_X^2  & \sigma_{XY} & \sigma_{XZ} \\
        \sigma_{XY} & \sigma_Y^2  & \sigma_{YZ} \\
        \sigma_{XZ} & \sigma_{YZ} & \sigma_Z^2
    \end{bmatrix} .
\end{equation}

Our dimension-reduction approach builds on the conditional distribution of $Z$ given $(X,Y)=(x,y)$. Under joint normality, this conditional distribution is univariate normal \citep[Theorem~2.5.1]{anderson2003}:
\begin{equation}
    \label{eq:conditional_dist}
    Z \mid X=x, Y=y \sim \mathcal{N}(\muZ, \sigZ^2) .
\end{equation}
The conditional mean and conditional variance are given by:
\begin{align}
    \muZ(x,y) & = \mu_Z + \bs{\Sigma}_{Z,XY} \bs{\Sigma}_{XY}^{-1} \begin{bmatrix} x - \mu_X \\ y - \mu_Y \end{bmatrix} \label{eq:cond_mean} , \\
    \sigZ^2   & = \sigma_Z^2 - \bs{\Sigma}_{Z,XY} \bs{\Sigma}_{XY}^{-1} \bs{\Sigma}_{Z,XY}^{\T} \label{eq:cond_var} ,
\end{align}
where $\bs{\Sigma}_{XY}$ is the leading $2 \times 2$ principal submatrix of $\bs{\Sigma}$, and $\bs{\Sigma}_{Z,XY} = [\sigma_{XZ}, \sigma_{YZ}]$ is the cross-covariance block. Because $\bs{\Sigma} \succ 0$, the principal submatrix $\bs{\Sigma}_{XY}$ is strictly positive definite and invertible. The conditional variance $\sigZ^2$ is the Schur complement of $\bs{\Sigma}_{XY}$ in $\bs{\Sigma}$, implying that $\sigZ^2 > 0$. In particular, $\sigZ^2$ does not depend on the conditioning values $(x,y)$, whereas $\muZ(x,y)$ is affine in $(x,y)$. The conditional distribution is therefore completely determined by the mean and variance above. Its CDF gives the inner integral over $z$, leaving a two-dimensional integral over $(x,y)$.

\section{The Model-Based Dimension Reduction Method}
\label{sec:prop}

We derive the CDF of the product $V = XYZ$. In sequential mediation, these variables represent the three estimated path coefficients $(\hat\beta_1, \hat\beta_2, \hat\beta_3)$; $\bs{\Sigma}$ denotes their asymptotic covariance matrix, and $V$ corresponds to the estimated indirect effect $\hat\theta$. Accordingly, $X$, $Y$, and $Z$ represent coefficient estimators rather than the exposure and outcome variables of Section~\ref{sec:intro}; \ref{app:delta} details this mapping. Formally, the CDF is the probability assigned to the region $\mathcal{A}(v) = \{(x,y,z) \in \mathbb{R}^3 : xyz < v\}$ under the joint distribution of $(X, Y, Z)$.

Direct evaluation requires three-dimensional quadrature over the non-convex region $\mathcal{A}(v)$. Its boundary is the product level set $L_v = \{(x,y,z) \in \mathbb{R}^3 : xyz = v\}$. When $xy \neq 0$, this boundary can be written as $z = v/(xy)$; this representation is undefined when $x=0$ or $y=0$. The constraint on $z$ depends on the sign of $xy$:
\begin{equation}
    \begin{cases}
        z < \frac{v}{xy} & \text{if } xy > 0 ,\\
        z > \frac{v}{xy} & \text{if } xy < 0 .
    \end{cases}
\end{equation}
The case $xy=0$ occurs on the two coordinate axes of the $xy$-plane. Because $\bs{\Sigma}_{XY}$ is positive definite, $(X,Y)$ has a bivariate normal density, and these axes have probability zero.
A direct fixed tensor-product quadrature rule over this domain has cubic cost in the number of nodes per coordinate, whereas direct adaptive quadrature must account for the non-convex domain and the sign-dependent integration limits. We partition the $xy$-plane into quadrants and integrate out $Z$ analytically.

\subsection{Analytic Marginalization via Conditional Expectation}

By iterated expectation, the CDF is the expectation, over $(X,Y)$, of the conditional probability $P(XYZ < v \mid X,Y)$. Factoring the joint density as $f_{X,Y,Z}(x,y,z) = f_{Z|X,Y}(z \mid x,y)f_{X,Y}(x,y)$ expresses this conditional probability as the inner integral over $z$. The integration limit depends on the sign of $xy$ (Quadrants I and III versus Quadrants II and IV). Because $Z \mid X,Y$ is univariate normal by Equation~(\ref{eq:conditional_dist}), the conditional probability is a lower- or upper-tail probability and can be written in terms of $\Phi(\cdot)$.

\subsection{The Main Computational Formula}

Combining the two conditional tail probabilities reduces the remaining numerical integration from three dimensions to two and lowers the cost of a fixed tensor-product quadrature rule from $\mathcal{O}(n^3)$ to $\mathcal{O}(n^2)$.

\begin{theorem}[Distribution of the Product of Three Normal Variables]
    \label{thm:main}
    Let $\bs{X} = [X, Y, Z]^{\T} \sim \mathcal{N}_3(\bs{\mu}, \bs{\Sigma})$ with $\bs{\Sigma} \succ 0$, as defined in Section~\ref{sec:back}. The CDF of $V = XYZ$ is given by:
    \begin{equation}
        \label{eq:cdf_final}
        \boxed{
            F_V(v) = \int_{-\infty}^{\infty} \int_{-\infty}^{\infty} \Phi\left(\text{sgn}(xy) \cdot \frac{v/(xy) - \muZ(x,y)}{\sigZ}\right) f_{X,Y}(x,y) \, dy \, dx
        }
    \end{equation}
    where $\text{sgn}(\cdot)$ is the signum function, $\muZ(x,y)$ is the conditional mean function in Eq.~(\ref{eq:cond_mean}), $\sigZ$ is the conditional standard deviation in Eq.~(\ref{eq:cond_var}), and $f_{X,Y}(x,y)$ is the bivariate normal PDF of $(X,Y)$.
\end{theorem}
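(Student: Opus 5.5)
We prove the formula by conditioning on $(X,Y)$ and then evaluating the inner probability case by case according to the sign of $xy$. By the law of total probability (equivalently, Fubini--Tonelli applied to the nonnegative indicator $\mathbf{1}\{xyz<v\}$ against the factored density $f_{Z|X,Y}(z\mid x,y)f_{X,Y}(x,y)$), we write
\begin{equation*}
F_V(v)=P(XYZ<v)=\int_{\mathbb{R}^2} P\bigl(xyZ<v \mid X=x,Y=y\bigr)\, f_{X,Y}(x,y)\,dy\,dx .
\end{equation*}
The factorization is legitimate because $\bs{\Sigma}_{XY}\succ 0$ gives $(X,Y)$ a bivariate normal density. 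The conditional law of $Z$ is the one in Eq.~(\ref{eq:conditional_dist}), with $\sigZ>0$ by the Schur complement argument of Section~\ref{sec:back}. Measurability of the integrand in $(x,y)$ is clear, since it is built from the affine map $\muZ$ and the continuous function $\Phi$.

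Next we compute the inner probability on the open quadrants, i.e.\ on the set where $xy\neq 0$. We standardize with $W=(Z-\muZ(x,y))/\sigZ\sim\mathcal{N}(0,1)$ conditionally.

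\textbf{Case $xy>0$.} The event is $Z<v/(xy)$, so its probability is $\Phi\bigl((v/(xy)-\muZ)/\sigZ\bigr)$.

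\textbf{Case $xy<0$.} Dividing by the negative number $xy$ flips the inequality, so the event is $Z>v/(xy)$. Its probability is $1-\Phi\bigl((v/(xy)-\muZ)/\sigZ\bigr)=\Phi\bigl(-(v/(xy)-\muZ)/\sigZ\bigr)$, by the symmetry $1-\Phi(t)=\Phi(-t)$.

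In both cases the probability is $\Phi\bigl(\mathrm{sgn}(xy)\,(v/(xy)-\muZ)/\sigZ\bigr)$, which is exactly the integrand in Eq.~(\ref{eq:cdf_final}).

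The only delicate point is the coordinate axes $\{xy=0\}$. There $v/(xy)$ is undefined, so the integrand in Eq.~(\ref{eq:cdf_final}) is not defined pointwise. The conditional probability itself is still well defined: it equals $\mathbf{1}\{0<v\}$. We dispose of this set by noting that it is a union of two lines, hence has Lebesgue measure zero in $\mathbb{R}^2$. Because $(X,Y)$ has a density, the set also has zero probability, so we may assign any bounded value (say $\mathbf{1}\{v>0\}$, or $0$) to the integrand there without changing the integral.

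Finally we check integrability, which is immediate: the integrand lies in $[0,1]$ and $f_{X,Y}$ integrates to one. The iterated integral over the four open quadrants therefore equals the full double integral, and this yields Eq.~(\ref{eq:cdf_final}). I expect no real obstacle. The main care is in stating precisely that the formula holds with the integrand defined almost everywhere, and in the sign flip when dividing by a negative $xy$.
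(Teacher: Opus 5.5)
Your proposal is correct and follows essentially the same route as the paper: factor the joint density as $f_{Z|X,Y}f_{X,Y}$, apply Tonelli to obtain an iterated integral, evaluate the inner conditional probability separately for $xy>0$ and $xy<0$ using $1-\Phi(t)=\Phi(-t)$, unify the two cases via $\mathrm{sgn}(xy)$, and discard the probability-zero set $\{xy=0\}$. The one point the paper states explicitly that you leave implicit is that $P(V=v)=0$, so $P(V<v)$ agrees with the usual $P(V\le v)$; your argument already gives this, because the conditional law of $Z$ has no atoms off the axes and the axes carry no probability.
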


The formula in Equation~(\ref{eq:cdf_final}) contains $v/(xy)$ and is therefore undefined when $x=0$ or $y=0$. These two coordinate axes have two-dimensional Lebesgue measure zero. Because $(X,Y)$ has a bivariate normal density, the probability that $(X,Y)$ lies on either axis is zero. The integrand may therefore take any value on the axes, such as zero, without changing the integral.

\begin{proof}
    We begin with the definition of the CDF:
    \begin{equation}
        F_V(v) = P(XYZ < v) = \iiint_{xyz < v} f_{X,Y,Z}(x,y,z) \, dz \, dy \, dx .
    \end{equation}
    The product level set
    \[
        L_v = \{(x,y,z) \in \mathbb{R}^3 : xyz = v\}
    \]
    has three-dimensional Lebesgue measure zero. Because $(X,Y,Z)$ has a joint density, $P(V=v)=0$; thus $P(V<v)=P(V\leq v)$, and the strict inequality above agrees with the standard CDF definition $F_V(v) = P(V \leq v)$. By conditional density factorization,
    \begin{equation}
        f_{X,Y,Z}(x,y,z) = f_{Z|X,Y}(z \mid x,y) \cdot f_{X,Y}(x,y) .
    \end{equation}
    Substituting this factorization into the CDF and applying Tonelli's theorem for non-negative measurable functions \citep[Theorem~5.28]{axler2020} yields the iterated integral:
    \begin{equation}
        \begin{aligned}
            F_V(v)
            &= \int_{-\infty}^{\infty} \int_{-\infty}^{\infty}
               \left[ \int_{xyz < v} f_{Z|X,Y}(z \mid x,y) \, dz \right] \\
            &\qquad{} \times f_{X,Y}(x,y) \, dy \, dx .
        \end{aligned}
    \end{equation}

    Determining the integration limits for $z$ conditional on $(X, Y) = (x, y)$ splits on the sign of $xy$. When $xy > 0$, the limits for $z$ are $(-\infty, v/(xy))$:
    \begin{equation}
        \begin{aligned}
            &\int_{-\infty}^{v/(xy)} f_{Z|X,Y}(z \mid x,y) \, dz \\
            &\quad = P\left(Z \leq \frac{v}{xy} \;\middle|\; X=x, Y=y\right) \\
            &\quad = \Phi\left(\frac{v/(xy) - \muZ(x,y)}{\sigZ}\right) .
        \end{aligned}
    \end{equation}
    When $xy < 0$, the integration limits for $z$ become $(v/(xy), \infty)$, and
    \begin{equation}
        \begin{aligned}
            &\int_{v/(xy)}^{\infty} f_{Z|X,Y}(z \mid x,y) \, dz \\
            &\quad = 1 - \Phi\left(\frac{v/(xy) - \muZ(x,y)}{\sigZ}\right) \\
            &\quad = \Phi\left(-\frac{v/(xy) - \muZ(x,y)}{\sigZ}\right),
        \end{aligned}
    \end{equation}
    where the second equality applies standard normal symmetry $1 - \Phi(t) = \Phi(-t)$.

    Unifying both cases via the signum function $\text{sgn}(xy)$ yields:
    \begin{equation}
        \int_{xyz < v} f_{Z|X,Y}(z \mid x,y) \, dz = \Phi\left(\text{sgn}(xy) \cdot \frac{v/(xy) - \muZ(x,y)}{\sigZ}\right)
    \end{equation}
    Substituting this result into the iterated integral yields Equation~(\ref{eq:cdf_final}). The set $\{\,(x,y) : xy = 0\,\}$ is the union of the two coordinate axes. It has two-dimensional Lebesgue measure zero and, because $(X,Y)$ has a bivariate normal density, has probability zero. Theorem~\ref{thm:convergence} and \ref{app:convergence} supply the formal convergence proof.
\end{proof}

This reduction yields a two-dimensional integral over $\mathbb{R}^2$ whose integrand is continuous on each open quadrant, as shown in the proof of the convergence theorem; adaptive quadrature is applied separately to these quadrants.

\begin{remark}[Density Representation and the Singularity at Zero]
    Fix $(x,y)$ with $xy \neq 0$. Conditionally on $(X,Y)=(x,y)$, the product $V=xyZ$ is univariate normal with mean $xy\,\muZ(x,y)$ and standard deviation $|xy|\,\sigZ$. Its conditional density at $t \in \mathbb{R}$ is
    \begin{equation*}
        k(t \mid x,y)
        = \frac{1}{|xy|\,\sigZ}
          \phi\!\left(\frac{t/(xy)-\muZ(x,y)}{\sigZ}\right),
    \end{equation*}
    where $\phi(\cdot)$ is the standard normal density. The absolute value is the Jacobian for the transformation $z \mapsto xy z$; when $xy<0$, the apparent sign change inside $\phi$ is immaterial because $\phi$ is even.

    Set $k(t \mid x,y)=0$ when $xy=0$, which does not change any integral because the coordinate axes have probability zero, and define
    \begin{equation*}
        g(t) = \iint_{\mathbb{R}^2}
        k(t \mid x,y) f_{X,Y}(x,y) \, dy \, dx .
    \end{equation*}
    The map $(t,x,y) \mapsto k(t \mid x,y)f_{X,Y}(x,y)$ is nonnegative and measurable: it is continuous on the open set where $xy\neq0$ and equals zero on the closed set $\{(t,x,y):xy=0\}$. Tonelli's theorem \citep[Theorem~5.28]{axler2020} therefore permits exchanging the order of integration. The inner integral below equals one when $xy\neq0$ and zero on the coordinate axes; because those axes have probability zero,
    \begin{equation*}
        \begin{aligned}
            \int_{-\infty}^{\infty} g(t)\,dt
            &= \iint_{\mathbb{R}^2}
               \left[\int_{-\infty}^{\infty} k(t \mid x,y)\,dt\right]
               f_{X,Y}(x,y) \, dy \, dx \\
            &= 1 .
        \end{aligned}
    \end{equation*}
    For each $(x,y)$ with $xy\neq0$, the integral $\int_{-\infty}^{v}k(t\mid x,y)\,dt$ is the conditional probability $P(xyZ<v\mid X=x,Y=y)$ that appears in Equation~(\ref{eq:cdf_final}). Conditioning therefore gives
    \begin{equation*}
        F_V(v)
        = \iint_{\mathbb{R}^2}
          \left[\int_{-\infty}^{v} k(t\mid x,y)\,dt\right]
          f_{X,Y}(x,y)\,dy\,dx .
    \end{equation*}
    A second application of Tonelli's theorem exchanges the integrals and gives
    \begin{equation*}
        F_V(v) = \int_{-\infty}^{v} g(t)\,dt .
    \end{equation*}
    Thus, $g$ is a density of $V$. Moreover, $F_V$ is absolutely continuous because, for every $a<b$, $F_V(b)-F_V(a)=\int_a^b g(t)\,dt$. The Lebesgue differentiation theorem \citep[Theorem~4.19]{axler2020} then gives $F_V'(v)=g(v)$ for almost every $v$. In particular, a version of the density is
    \begin{equation}
        \label{eq:pdf_final}
        \begin{aligned}
            f_V(v)
            &= \iint_{\mathbb{R}^2}
               \frac{1}{|xy|\,\sigZ}
               \phi\!\left(\frac{v/(xy)-\muZ(x,y)}{\sigZ}\right) \\
            &\qquad{} \times f_{X,Y}(x,y) \, dy \, dx .
        \end{aligned}
    \end{equation}
    The conditional-density construction and Tonelli's theorem therefore yield the density without differentiating Equation~(\ref{eq:cdf_final}) under the integral sign. Direct differentiation would require exchanging $d/dv$ with the double integral over $(x,y)$. A standard sufficient condition for this exchange at a given $v_0$ is the existence of a single integrable function that bounds the differentiated integrand for all $v$ in a neighborhood of $v_0$, including near the coordinate axes.

    At $v=0$, the integrand in Equation~(\ref{eq:pdf_final}) becomes
    \begin{equation*}
        \frac{\phi\big(\muZ(x,y)/\sigZ\big)f_{X,Y}(x,y)}
        {|xy|\sigZ},
    \end{equation*}
    Write this integrand as $H(x,y)/|xy|$, where
    \begin{equation*}
        H(x,y)=\frac{\phi\big(\muZ(x,y)/\sigZ\big)f_{X,Y}(x,y)}{\sigZ}.
    \end{equation*}
    The function $H$ is continuous and strictly positive. Fix $y_0\neq0$ and choose a sufficiently small compact rectangle around $(0,y_0)$ that does not intersect the $x$-axis. On this rectangle, $H(x,y)/|y|$ is bounded below by a positive constant. After integration over the rectangle's $y$-interval, the double integral is therefore bounded below, for some $C>0$, by
    \begin{equation*}
        C\int_{-\epsilon}^{\epsilon}\frac{dx}{|x|}=+\infty .
    \end{equation*}
    The improper integral is infinite because its values over punctured neighborhoods grow without bound, not because the integrand is undefined at $x=0$. Hence, Equation~(\ref{eq:pdf_final}) defines a density version with $f_V(0)=+\infty$. However, there is no point mass at zero: $\{V=0\}=\{X=0\}\cup\{Y=0\}\cup\{Z=0\}$, and each nondegenerate normal marginal assigns probability zero to a singleton. Thus, an infinite density value on the Lebesgue-null set $\{0\}$ is consistent with $\Pr(V=0)=0$ and a continuous CDF.

    Finally, substituting $k(v \mid x,y)=f_{Z|X,Y}\big(v/(xy) \mid x,y\big)/|xy|$ and using $f_{Z|X,Y}f_{X,Y}=f_{X,Y,Z}$ gives
    \begin{equation*}
        f_V(v) = \iint_{\mathbb{R}^2} \frac{1}{|xy|}\,
        f_{X,Y,Z}\!\left(x,y,\frac{v}{xy}\right) \, dy \, dx ,
    \end{equation*}
    which is the product-Jacobian representation of the density \citep[Result~2]{ogasawara2025}.
\end{remark}

\subsection{Computational Algorithm}

We translate the formulation of Theorem~\ref{thm:main} into Algorithm~\ref{alg:cdf_computation}. To ensure numerical stability when correlations approach $\pm 1$, we avoid explicit matrix inversion and compute the conditional-mean regression vector $\bs{b}_{Z|XY}$ via the Cholesky factor of the leading submatrix $\bs{R}_{XY}$ with forward and backward triangular substitutions. Standardizing the variables to unit marginal variances removes scale disparities across variables and enhances numerical conditioning.

\begin{algorithm}[!ht]
    \SetAlgoLined
    \SetFuncSty{textsc}
    \SetAlFnt{\footnotesize}
    \SetDataSty{textrm}
    \SetNlSty{textrm}{}{:}

    \caption{Computation of $F_V(v)$ via Dimension Reduction}
    \label{alg:cdf_computation}

    \KwIn{Mean vector $\bs{\mu} = [\mu_X, \mu_Y, \mu_Z]^{\T}$, covariance matrix $\bs{\Sigma}$, threshold $v$, near-axis cutoff $\epsilon_{\mathrm{axis}} = 10^{-10}$.}
    \KwOut{$F_V(v) = P(XYZ < v)$}

    \tcp{Preprocessing: Standardization}
    \BlankLine
    Extract standard deviations: $\sigma_X \leftarrow \sqrt{\Sigma_{11}}$, $\sigma_Y \leftarrow \sqrt{\Sigma_{22}}$, $\sigma_Z \leftarrow \sqrt{\Sigma_{33}}$\label{ln:pre-first}\;
    Compute correlation matrix: $\bs{R} \leftarrow \diag(\bs{\sigma})^{-1} \bs{\Sigma} \, \diag(\bs{\sigma})^{-1}$\;
    Standardize means: $\bs{\mu}^* \leftarrow [\mu_X/\sigma_X, \mu_Y/\sigma_Y, \mu_Z/\sigma_Z]^{\T}$\;
    Standardize threshold: $v^* \leftarrow v / (\sigma_X \sigma_Y \sigma_Z)$\label{ln:pre-last}\;

    \tcp{Step 1: Compute Conditional Distribution Parameters}
    \BlankLine
    Partition $\bs{R}$ into $\bs{R}_{XY} \leftarrow \bs{R}[1{:}2, 1{:}2]$ and $\bs{R}_{Z,XY} \leftarrow \bs{R}[3, 1{:}2]$\label{ln:cond-first}\;
    Compute Cholesky decomposition: $\bs{L}\bs{L}^{\T} \leftarrow \bs{R}_{XY}$\label{ln:chol-first}\;
    \tcp{Solve $\bs{R}_{XY} \bs{b}_{Z|XY} = \bs{R}_{Z,XY}^{\T}$ via forward/backward substitution}
    $\bs{w} \leftarrow \text{ForwardSolve}(\bs{L}, \bs{R}_{Z,XY}^{\T})$\;
    $\bs{b}_{Z|XY} \leftarrow \text{BackwardSolve}(\bs{L}^{\T}, \bs{w})$\label{ln:chol-last}\;
    Compute conditional variance: $\sigZ^2 \leftarrow 1 - \bs{R}_{Z,XY} \bs{b}_{Z|XY}$; set $\sigZ \leftarrow \sqrt{\sigZ^2}$\label{ln:cond-last}\;

    \tcp{Step 2: Define Integrand Function $g(x,y)$}
    \BlankLine
    \SetKwFunction{FunctG}{$g(x, y)$}
    \SetKwProg{Fn}{Function}{:}{End Function}
    \Fn{\FunctG{x, y}}{\label{ln:fn-first}
        Compute conditional mean: $\muZ(x,y) \leftarrow \mu_Z^* + \bs{b}_{Z|XY}^{\T} [x - \mu_X^*, y - \mu_Y^*]$ \;
        Compute bivariate normal PDF: $f_{X,Y}(x,y) \leftarrow \frac{1}{2\pi\sqrt{|\bs{R}_{XY}|}} \exp\left(-\frac{1}{2}\bs{z}^{\T} \bs{R}_{XY}^{-1} \bs{z}\right)$ \tcp*{where $\bs{z} = [x - \mu_X^*, y - \mu_Y^*]^{\T}$}
        \If{$|xy| < \epsilon_{\mathrm{axis}}$}{\label{ln:axis-first}
            $t \leftarrow \text{sgn}(v^*) \cdot 10^{10}$ \tcp*{Near-axis overflow guard}
        }
        \Else{
            $t \leftarrow \text{sgn}(xy) \cdot \frac{v^*/(xy) - \muZ(x,y)}{\sigZ}$\label{ln:axis-last}\;
        }
        \KwRet{$\Phi(t) \cdot f_{X,Y}(x,y)$}\label{ln:fn-last}\;
    }

    \tcp{Step 3: Numerical Integration (split at the coordinate axes)}
    \BlankLine
    Set integration bounds: $[a_x, b_x] \leftarrow [\mu_X^* - 8, \mu_X^* + 8]$, $[a_y, b_y] \leftarrow [\mu_Y^* - 8, \mu_Y^* + 8]$\label{ln:int-first}\;
    Split the box at $x=0$ and $y=0$ into quadrant cells $\{C_k\}$, clamping split points to the bounds\;
    \tcp*{$\text{sgn}(xy)$ is constant inside each cell}
    Evaluate $I \leftarrow \sum_k \int_{C_k} g(x, y) \, dx \, dy$ by adaptive cubature on each cell\;
    \tcp*{H-adaptive cubature (\texttt{hcubature}) per cell, relative tolerance $10^{-6}$}

    \KwRet{$I$}\label{ln:int-last}

\end{algorithm}
\FloatBarrier

\subsection{Complexity Analysis}
\label{sec:complexity}

The computational complexity of Algorithm~\ref{alg:cdf_computation} divides into three distinct stages:

\textbf{Preprocessing and Parameter Setup (Lines~\ref{ln:pre-first}--\ref{ln:pre-last}):} Computing the correlation matrix $\bs{R}$, scaling the mean vector to $\bs{\mu}^*$, and standardizing the evaluation threshold to $v^*$ requires a fixed number of arithmetic operations, giving $\mathcal{O}(1)$ time complexity for fixed dimension $d=3$.

\textbf{Conditional Distribution Parameters (Lines~\ref{ln:cond-first}--\ref{ln:cond-last}):} The $2 \times 2$ Cholesky factorization $\bs{L}\bs{L}^{\T} = \bs{R}_{XY}$ and subsequent triangular substitutions compute the coefficient vector $\bs{b}_{Z|XY}$ and conditional variance $\sigZ^2$ in $\mathcal{O}(1)$ operations. Because $\bs{b}_{Z|XY}$ and $\sigZ$ are invariant to the integration variables $(x, y)$, these parameters are computed once prior to quadrature.

\textbf{Numerical Quadrature (Lines~\ref{ln:int-first}--\ref{ln:int-last}):} The dominant cost arises from evaluating the two-dimensional integral over the quadrant cells $\{C_k\}$. Each call to the integrand $g(x,y)$ (Lines~\ref{ln:fn-first}--\ref{ln:fn-last}) requires $\mathcal{O}(1)$ operations, comprising one affine transformation for $\muZ(x,y)$, one bivariate Gaussian density evaluation $f_{X,Y}(x,y)$, and one univariate normal CDF evaluation $\Phi(t)$.

For a classical tensor-product grid with $n = \mathcal{O}(h^{-1})$ nodes per coordinate, evaluating the double integral requires $N = \mathcal{O}(n^2)$ integrand evaluations. On smooth functions with convergence order $r$, achieving a target error tolerance $\delta$ requires $N = \mathcal{O}(\delta^{-2/r})$ evaluations \citep[\S\S5.6, 5.8]{davis1984}. Direct three-dimensional quadrature over the non-convex domain requires $\mathcal{O}(n^3)$ grid evaluations. Complexity theory establishes that no $N$-point linear quadrature rule can surpass the intrinsic lower error bound for a fixed smoothness class \citep{bakhvalov1959, davis1984}, and adaptive point selection cannot exceed this optimal rate on convex, balanced function classes \citep[Theorem~0]{novak2016}.

In practice, our implementation uses globally adaptive $h$-subdivision \citep{berntsen1991} based on a degree-7 Genz--Malik cubature rule with an embedded degree-5 error estimator \citep{genzmalik1980}. This approach concentrates function evaluations near steep gradients without constructing a fine uniform mesh over the entire domain. Across the six simulation regimes of Section~\ref{sec:simulation} and thresholds $q \in \{-1, 0, 0.5, 1\}$, evaluation counts at $\delta = 10^{-6}$ ranged from roughly $0.9 \times 10^{4}$ to $5.0 \times 10^{4}$; these counts describe the evaluated cases, not an asymptotic complexity bound.

\textbf{Comparison with Alternative Inferential Methods:}
\begin{itemize}
    \item \textit{Delta Method:} Requires $\mathcal{O}(1)$ operations. However, its first-order asymptotic variance collapses when two path coefficients vanish (\ref{app:delta}), and the resulting near-boundary degradation produces severe undercoverage (Section~\ref{subsec:coverage}).
    \item \textit{Nonparametric Bootstrap:} Incurs an $\mathcal{O}(B \times N)$ computational cost, where $B \ge 5000$ resamples and sample size $N$ make model refitting computationally demanding in simulation studies and power calculations.
    \item \textit{Monte Carlo Integration:} Requires $\mathcal{O}(N_{\text{MC}})$ samples, where $N_{\text{MC}} \approx 10^8$ independent draws are necessary to match the accuracy of the proposed quadrature routine (Section~\ref{sec:simulation}).
    \item \textit{Proposed Dimension-Reduction Method:} Requires $\mathcal{O}(n^2)$ grid evaluations and roughly $0.9\text{--}5.0 \times 10^4$ adaptive evaluations at $\delta = 10^{-6}$ on the evaluated cases, evaluating the CDF in under one second (Table~\ref{tab:timing}) with mean absolute error roughly 5500-fold smaller than that of the Delta method (Table~\ref{tab:accuracy}).
\end{itemize}

\subsection{Convergence Analysis}
\label{subsec:convergence}

A key theoretical property of Equation~(\ref{eq:cdf_final}) is the integrability of the integrand near the coordinate axes where $xy=0$. While the quotient $v/(xy)$ diverges as $xy \to 0$ for $v \neq 0$, the normal CDF factor satisfies $0 \le \Phi(\cdot) \le 1$ everywhere. The integrand is therefore strictly bounded by the integrable bivariate Gaussian density $f_{X,Y}(x,y)$, guaranteeing that the CDF integral exists and is finite for all $v \in \mathbb{R}$. By the Dominated Convergence Theorem, numerical evaluation over expanding compact domains converges to the exact integral. Theorem~\ref{thm:convergence} formalizes this result, Corollary~\ref{cor:truncation} provides an explicit uniform truncation error bound, and \ref{app:convergence} details the complete proof.

\begin{theorem}[Finiteness and Truncation Convergence]
    \label{thm:convergence}
    Let $f_{X,Y}(x,y)$ be the marginal density of $(X,Y)$ under $\bs{X} \sim \mathcal{N}_3(\bs{\mu},\bs{\Sigma})$ with $\bs{\Sigma} \succ 0$, as defined in Section~\ref{sec:back}. For $xy \neq 0$, define
    \begin{equation}
        h_v(x,y) =
        \Phi\left(\text{sgn}(xy) \cdot \frac{v/(xy) - \muZ(x,y)}{\sigZ}\right)
        f_{X,Y}(x,y),
    \end{equation}
    and set $h_v(x,y)=0$ when $xy=0$. Then $h_v$ is integrable over $\mathbb{R}^2$, and
    \[
        I(v)=\iint_{\mathbb{R}^2} h_v(x,y)\,dy\,dx
    \]
    is finite for every $v \in \mathbb{R}$. Moreover, for
    \[
        \mathcal{D}_n
        = \{(x,y): |xy| \geq n^{-1},\ |x| \leq n,\ |y| \leq n\},
    \]
    the truncated integrals satisfy
    \[
        \lim_{n\to\infty}\iint_{\mathcal{D}_n}h_v(x,y)\,dy\,dx=I(v).
    \]
\end{theorem}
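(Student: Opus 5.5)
The argument is a domination argument by the marginal density $f_{X,Y}$, followed by the Dominated Convergence Theorem applied to the indicators of the truncation sets $\mathcal{D}_n$. The plan has four steps.

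\textbf{Step 1: measurability of $h_v$.} Let $U=\{(x,y): xy\neq 0\}$, the union of the four open quadrants. On each quadrant, $\text{sgn}(xy)$ is constant. The map $(x,y)\mapsto v/(xy)$ is continuous there, and $\muZ(x,y)$ is affine by Eq.~(\ref{eq:cond_mean}). The constant $\sigZ>0$ because it is the Schur complement of a positive-definite matrix, as noted in Section~\ref{sec:back}. Since $\Phi$ and $f_{X,Y}$ are continuous, $h_v$ is continuous on the open set $U$. It vanishes on the closed set $\{xy=0\}$, so $h_v$ is Borel measurable.

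\textbf{Step 2: domination and finiteness.} Because $0\le \Phi\le 1$, we have $0\le h_v(x,y)\le f_{X,Y}(x,y)$ for all $(x,y)$, including the axes, where $h_v=0$. The bivariate normal density $f_{X,Y}$ is well defined because $\bs{\Sigma}_{XY}\succ 0$, and it integrates to $1$. Hence $h_v\in L^1(\mathbb{R}^2)$ and $0\le I(v)\le 1$ for every $v\in\mathbb{R}$.

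\textbf{Step 3: truncation convergence.} Set $g_n=h_v\mathbf{1}_{\mathcal{D}_n}$. Then $|g_n|\le f_{X,Y}$ for every $n$, so $f_{X,Y}$ serves as a single integrable dominating function.
\begin{itemize}
\item \emph{Monotonicity.} The sets $\mathcal{D}_n$ increase in $n$: the condition $|xy|\ge n^{-1}$ weakens as $n$ grows, and the box $|x|,|y|\le n$ expands.
\item \emph{Union.} The union of the $\mathcal{D}_n$ is exactly $U$. Given $xy\neq 0$, choose $n\ge\max\{|x|,|y|,1/|xy|\}$; then $(x,y)\in\mathcal{D}_n$. Conversely, every point of every $\mathcal{D}_n$ has $xy\neq0$.
\item \emph{Pointwise limit.} Consequently $g_n\to h_v\mathbf{1}_U$ pointwise. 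This limit equals $h_v$ everywhere, because $h_v=0$ off $U$ by definition. Alternatively, one can note that the axes have Lebesgue measure zero.
\end{itemize}
The Dominated Convergence Theorem then gives $\iint_{\mathcal{D}_n} h_v\to I(v)$. Since $h_v\ge 0$ and $g_n$ increases to $h_v$, the Monotone Convergence Theorem gives the same conclusion without any dominating function.

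\textbf{Step 4: connection to Theorem~\ref{thm:main} and the main obstacle.} Combined with Theorem~\ref{thm:main}, this identifies $I(v)=F_V(v)$. The only genuine subtlety is Step 1, together with the exact description of $\bigcup_n\mathcal{D}_n$ in Step 3. The integrand is discontinuous across the axes: as $xy\to0^{\pm}$ with $v\neq 0$, the argument of $\Phi$ tends to $\pm\infty$ depending on the sign of $v$ and on the quadrant. Measurability and the pointwise limit must therefore be argued on each quadrant separately rather than on all of $\mathbb{R}^2$. Once that is handled, everything else follows from $0\le\Phi\le1$.

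One could optionally add a quantitative tail statement, in the spirit of Corollary~\ref{cor:truncation}. It would bound $I(v)-\iint_{\mathcal{D}_n}h_v$ by $P\big((X,Y)\notin\mathcal{D}_n\big)$, split into two pieces:
\begin{itemize}
\item the box-exit probability, handled by Gaussian marginal tails;
\item the near-axis strip probability $P(|XY|<1/n)\le P(|X|<n^{-1/2})+P(|Y|<n^{-1/2})$, which is $\mathcal{O}(n^{-1/2})$ by boundedness of the marginal densities.
\end{itemize}
This gives a rate that is uniform in $v$, although the theorem as stated does not require it.
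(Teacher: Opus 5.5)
Your proposal is correct and follows essentially the same route as the paper's appendix proof: continuity off the axes gives Borel measurability, the bound $0\le h_v\le f_{X,Y}$ gives integrability, and dominated convergence along the exhausting sets $\mathcal{D}_n$ gives the limit. Your explicit check that $\mathcal{D}_n$ increases to $\{xy\neq0\}$, and the monotone-convergence alternative, are harmless refinements. One small slip in your Step~4 remark: since $\text{sgn}(xy)\cdot v/(xy)=v/|xy|$, the argument of $\Phi$ tends to $+\infty$ or $-\infty$ according to the sign of $v$ alone, not the quadrant. So for $v\neq0$ the limits from all quadrants agree, and any discontinuity at the axes comes only from the convention $h_v=0$ there; this does not affect your proof.
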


\section{Computational Implementation}
\label{sec:computation}

\subsection{Numerical Stability and Precision}
\label{subsec:stability}

\subsubsection{Standardization and Conditioning}

When $|\rho_{XY}| \to 1$, the $2 \times 2$ submatrix $\bs{\Sigma}_{XY}$ becomes ill-conditioned, which can destabilize the computation of the conditional regression vector $\bs{b}_{Z|XY}$. We implement two complementary numerical strategies to safeguard stability:

\textbf{(1) Variable Standardization:} As implemented in Algorithm~\ref{alg:cdf_computation} (Lines~\ref{ln:pre-first}--\ref{ln:pre-last}), we rescale the variables to unit variance by evaluating the correlation matrix $\bs{R} = \diag(\bs{\sigma})^{-1} \bs{\Sigma} \, \diag(\bs{\sigma})^{-1}$. This transformation eliminates marginal scale disparities, reducing the dynamic range of matrix elements, and can thereby improve numerical conditioning, although diagonal scaling does not guarantee a smaller condition number for every $\bs{\Sigma}$. The standardized threshold $v^* = v/(\sigma_X \sigma_Y \sigma_Z)$ and standardized mean vector $\bs{\mu}^*$ ensure exact scale invariance.

\textbf{(2) Cholesky Decomposition:} Rather than inverting $\bs{R}_{XY}$ directly, we solve the linear system $\bs{R}_{XY}\bs{b}_{Z|XY} = \bs{R}_{Z,XY}^{\T}$ by Cholesky factorization followed by forward and backward substitution (Algorithm~\ref{alg:cdf_computation}, Lines~\ref{ln:chol-first}--\ref{ln:chol-last}). This approach is backward stable for positive-definite systems and limits roundoff accumulation \citep{higham2002}. Because $\bs{R}_{XY} = \bs{L}\bs{L}^{\T}$, the eigenvalues of $\bs{R}_{XY}$ equal the squares of the singular values of $\bs{L}$, yielding the 2-norm condition number $\kappa_2(\bs{L}) = \sqrt{\kappa_2(\bs{R}_{XY})}$. Although the conditioning of the linear solve remains governed by $\kappa_2(\bs{R}_{XY})$, Cholesky factorization avoids explicit matrix inversion and is computationally cheaper.

\subsubsection{Near-Axis Numerical Safeguard}

The analytical formula in Equation~(\ref{eq:cdf_final}) contains the quotient $v/(xy)$, which is undefined along the coordinate axes $x=0$ and $y=0$. Although these axes have Lebesgue measure zero and do not contribute probability mass (\ref{app:convergence}), floating-point evaluation must return a finite value at points on or extremely near the axes, where $v/(xy)$ is infinite or, at $v=0$, the undefined $0/0$.

We implement a near-axis numerical safeguard (Algorithm~\ref{alg:cdf_computation}, Lines~\ref{ln:axis-first}--\ref{ln:axis-last}). When $|xy| < \epsilon_{\mathrm{axis}}$ with standardized cutoff $\epsilon_{\mathrm{axis}} = 10^{-10}$, we set the argument of $\Phi(\cdot)$ to $\text{sgn}(v^*) \cdot 10^{10}$. This threshold is independent of the quadrature error tolerances. For $xy \neq 0$, the exact argument is
\begin{equation}
    a_v(x,y)
    = \frac{\text{sgn}(xy)}{\sigZ}
      \left(\frac{v}{xy} - \muZ(x,y)\right) .
\end{equation}
For any point $(x_0, y_0)$ on the coordinate axes with $v \neq 0$,
\begin{equation}
    \lim_{\substack{(x,y)\to(x_0,y_0)\\xy\neq0}}
    \Phi\left(a_v(x,y)\right) =
    \begin{cases}
        1 & \text{if } v > 0 ,\\
        0 & \text{if } v < 0 ,
    \end{cases}
\end{equation}
regardless of the path of approach. When $v=0$, the safeguard assigns $\Phi(0)=1/2$ on the thin strip $|xy| < \epsilon_{\mathrm{axis}}$ as a numerical convention. Because the Gaussian CDF satisfies $0 \le \Phi(\cdot) \le 1$, the absolute integration error introduced by this substitution is bounded by $P(|XY| < \epsilon_{\mathrm{axis}})$. This probability vanishes as $\epsilon_{\mathrm{axis}} \to 0$ because $(X,Y)$ is continuously distributed. For the six scenarios of Section~\ref{sec:simulation}, computed by one-dimensional numerical integration in the standardized coordinates, it is at most $2.1 \times 10^{-9}$ (High Correlation), well below the $10^{-6}$ quadrature tolerance. Mathematically $\Phi(-10) = 7.6 \times 10^{-24}$, and in double precision $\Phi(10)$ already evaluates to exactly $1$; an argument of $\pm 10^{10}$ therefore returns the limiting values $1$ and $0$ to machine precision.

\subsubsection{Evaluation of the Bivariate Normal PDF}

Evaluating $f_{X,Y}(x,y)$ requires computing the exponential of a quadratic form. Far from the mean under high correlation $|\rho_{XY}| \approx 1$ (for example, at domain corners where $\rho_{XY} = 0.99$ and the exponent falls below $-6000$), the bivariate density underflows to zero in double precision. This underflow does not degrade accuracy because the corresponding tail regions carry negligible probability mass. The adaptive cubature routine naturally concentrates evaluations in regions with non-negligible probability density.

Numerical overflow cannot occur: for $\bs{z} = [x - \mu_X^*, y - \mu_Y^*]^{\T}$, the exponent quadratic form
\begin{equation}
    -\frac{1}{2}\bs{z}^{\T} \bs{R}_{XY}^{-1} \bs{z}
\end{equation}
is negative semi-definite for all $|\rho_{XY}| < 1$. Because the exponent is non-positive, $\exp(\cdot)$ can only underflow toward zero, eliminating the need for logarithmic-scale transformations.

\subsection{Integration Domain and Truncation}
\label{subsec:integration}

\subsubsection{Infinite Domain Truncation}

Although Equation~(\ref{eq:cdf_final}) defines an integral over $\mathbb{R}^2$, numerical cubature requires a bounded domain. We truncate integration to the symmetric box $\mathcal{D}_c = [\mu_X^* - c, \mu_X^* + c] \times [\mu_Y^* - c, \mu_Y^* + c]$ with standardized half-width $c = 8$ (Algorithm~\ref{alg:cdf_computation}, Line~\ref{ln:int-first}).

This choice is grounded in Corollary~\ref{cor:truncation}, which bounds the truncation error by the excluded marginal probability mass:
\begin{equation}
    \left| F_V(v) - \iint_{\mathcal{D}_c} g(x,y) \, dy \, dx \right| \;\leq\; 4\,\Phi(-c) ,
\end{equation}
uniformly for all $v \in \mathbb{R}$ and all positive-definite covariance matrices $\bs{\Sigma} \succ 0$. Setting $c = 8$ gives $4\Phi(-8) \approx 2.5 \times 10^{-15}$, which is more than eight orders of magnitude below the default relative tolerance of $10^{-6}$. Domain truncation error is therefore negligible in double precision, ensuring that any residual numerical error reflects quadrature discretization rather than boundary omission.

\subsubsection{Adaptive Cubature Strategy}
\label{subsubsec:cubature}

We implement the $h$-adaptive cubature method (\texttt{hcubature}) from the \texttt{cubature} R package \citep{cubature}, which executes globally adaptive domain subdivision \citep{berntsen1991}. The algorithm partitions the integration domain into hyper-rectangles, applying an embedded degree-7/degree-5 Genz--Malik cubature pair \citep{genzmalik1980}.

Two algorithmic features govern performance:
\begin{itemize}
    \item \textbf{Adaptive Error Bisection:} Subregions with the largest estimated local error are bisected along the coordinate axis with the largest fourth difference, concentrating evaluations where the integrand exhibits steep gradients.
    \item \textbf{Quadrant-Based Domain Partitioning:} Partitioning the domain at $x=0$ and $y=0$ aligns each subregion strictly within an open quadrant, ensuring that $\text{sgn}(xy)$ is constant within each cubature call.
\end{itemize}

A caveat applies to adaptive error control. That an automatic routine can return a value in error while reporting convergence is a documented property of this class of algorithms rather than an implementation-specific defect. \citet{berntsen1991} analyze this issue for routines that discard earlier function evaluations across subdivision levels (including the Genz--Malik algorithm \citep{genzmalik1980}), where error estimates based on finite evaluations can prove unreliable. They introduce a two-level error estimate in response. This phenomenon has been studied since \citet{lyness1976}; \citet{gonnet2012} characterizes the families of integrands that cause failure for various adaptive estimators. In a sweep of 1000 random correlation matrices evaluated at $v = 0$ with $\bs{\mu} = \mathbf{0}$, where $F_V(0) = \tfrac{1}{2}$ exactly by symmetry, \texttt{hcubature} returned values in error by up to $9.2 \times 10^{-4}$ on two matrices while reporting convergence. The error persisted under tolerance refinement to $10^{-10}$, and the internal error estimate understated the true error by up to three orders of magnitude. Both cases occurred among the seven matrices with condition number $\kappa_2(\bs{\Sigma}) > 10^{3}$ (2 of 1000 overall; 2 of 7 among the ill-conditioned). Ill-conditioning alone does not imply failure: the remaining five matrices showed only ordinary quadrature error (at most $3.6 \times 10^{-6}$) that vanished under refinement. For near-singular $\bs{\Sigma}$, we recommend cross-checking the default method against the fixed-node \texttt{"gauss"} option at two successive node counts, or against exact anchor values such as $F_V(0) = \tfrac{1}{2}$ at $\bs{\mu} = \mathbf{0}$. The six simulation scenarios of Section~\ref{sec:simulation} are unaffected: the cross-engine disagreement is below the $10^{-6}$ tolerance at every scenario and threshold checked (maximum $3.9 \times 10^{-7}$).

\subsection{Edge Cases and Pathological Scenarios}
\label{subsec:edge}

\subsubsection{Mixed-Sign Correlations}

When the correlations among the coefficient estimators have mixed signs (for instance, $\rho_{XY} > 0$, $\rho_{XZ} < 0$, $\rho_{YZ} > 0$), the product distribution departs from normality primarily through excess kurtosis rather than skewness. Mixed signs here are a property of the sampling covariance, not of the path coefficients: in the Mixed-Sign Correlation scenario every coefficient is zero ($\bs{\mu} = \bs{0}$), so the scenario does not represent inconsistent mediation, which requires nonzero coefficients. Under the central null $\bs{\mu} = \bs{0}$, the joint Gaussian density is invariant under $(X,Y,Z) \mapsto (-X,-Y,-Z)$, which reflects $V \mapsto -V$. The distribution of $V$ is therefore symmetric about zero, and all odd central moments vanish. Non-normality manifests as heavy tails: the excess kurtosis of $V$, derived from Gaussian product moments (Remark~\ref{rem:moments}, \ref{app:convergence}), is $32.5$ in the Mixed-Sign Correlation scenario and ranges from $24.0$ ($\bs{\Sigma} = \mathbf{I}_3$) to $50.6$ across zero-mean configurations. The Delta method's error in this regime is comparable to other scenarios (Table~\ref{tab:accuracy}); its failure is driven by gradient collapse near zero coefficients (\ref{app:delta}) rather than tail weight or correlation patterns.

The proposed dimension-reduction method requires no special modifications under mixed-sign correlations. The conditional probability retains its standard Gaussian form within each quadrant, and the conditional mean $\muZ(x,y)$ incorporates the mixed-sign covariance structure directly through Equation~(\ref{eq:cond_mean}).

\subsubsection{Near-Zero Means}

When two or more path coefficients approach zero, the Delta-method gradient collapses toward zero (\ref{app:delta}), producing maximum CDF errors near $\maxDeltaZeroMean$ (Table~\ref{tab:accuracy}). The proposed method requires no special adjustments: Theorem~\ref{thm:main} and the conditional Gaussian distribution hold for arbitrary mean vectors $\bs{\mu}$, and Algorithm~\ref{alg:cdf_computation} executes uniformly across all parameter regimes.

\subsubsection{Perfect or Near-Perfect Correlation}

When $|\rho_{XY}| > 0.99$, the $2 \times 2$ correlation submatrix $\bs{R}_{XY}$ becomes ill-conditioned, with $\kappa_2(\bs{R}_{XY}) = (1+|\rho_{XY}|)/(1-|\rho_{XY}|) > 199$. Ill-conditioning of this block alone does not make the conditional variance small: with $\rho_{XZ} = \rho_{YZ} = 0$, $\bs{\Sigma}$ approaches singularity as $|\rho_{XY}| \to 1$ while $\sigZ^2 = \sigma_Z^2$ stays fixed. Because $\det \bs{\Sigma} = \det \bs{\Sigma}_{XY} \cdot \sigZ^2$, the full covariance matrix approaches singularity by two distinct routes. If $\bs{\Sigma}_{XY}$ approaches singularity, the solve for $\muZ$ (Equation~(\ref{eq:cond_mean})) becomes ill-conditioned. If instead $Z$ approaches a linear function of $(X, Y)$, the Schur complement $\sigZ^2$ (Equation~(\ref{eq:cond_var})) vanishes and the conditional distribution of $Z$ degenerates.

We safeguard against singularity by verifying that all eigenvalues of $\bs{\Sigma}$ are strictly positive ($\lambda_{\min} > 0$). If any eigenvalue is non-positive ($\lambda_{\min} \le 0$), the function terminates with an informative error message. For positive-definite but ill-conditioned matrices ($\kappa_2(\bs{\Sigma}) > 10^3$), the adaptive cubature diagnostic cross-checks of Section~\ref{subsubsec:cubature} apply.

\subsection{Software Implementation and Interface}
\label{subsec:software}

Our R implementation provides an accessible interface for sequential mediation analysis. The primary function \texttt{p\_prod3\_v1(q, mean, cov, method = "hcubature")} takes a scalar threshold \texttt{q}, a length-3 mean vector \texttt{mean}, a $3 \times 3$ positive-definite covariance matrix \texttt{cov}, and an integration \texttt{method} (\texttt{"hcubature"} or \texttt{"gauss"}). It returns the scalar cumulative probability $F_V(q)$. For numerical audits, setting \texttt{return\_diagnostics = TRUE} returns per-cell error estimates, function evaluation counts, and convergence flags without altering the core return structure. Confidence intervals are obtained by inverting the distribution function, identifying $\{q : F_V(q) \in [\alpha/2, 1 - \alpha/2]\}$ via univariate root-finding.

\textbf{Computational Performance:} On an Apple M2 Pro laptop (16GB RAM), evaluating a single CDF value requires approximately \timeEvalLaptopLo--\timeEvalLaptopHi{} seconds across diverse correlation regimes. Inverting the CDF via root-finding to construct a 95\% confidence interval requires \nEvalCILaptopLo--\nEvalCILaptopHi{} function evaluations (two \texttt{uniroot} inversions at tolerance $10^{-6}$), taking \timeCILaptopLo--\timeCILaptopHi{} seconds. By comparison, nonparametric bootstrap confidence intervals with $B = 5000$ resamples require \timeBootstrapEst{} seconds on the same hardware, representing an order-of-magnitude reduction in computation time (\timeCILaptopLo--\timeCILaptopHi{} versus \timeBootstrapEst{} seconds per interval).

\textbf{Independent Cross-Validation:} We validated \texttt{p\_prod3\_v1()} against the orthant probability algorithm of \citet{genz2009} in the \texttt{mvtnorm} package. At $q=0$, the probability $P(XYZ < 0)$ decomposes into the sum of four disjoint orthant probabilities (three containing exactly one negative factor and one containing three negative factors). Evaluated across all six simulation scenarios, this orthant decomposition agrees with \texttt{p\_prod3\_v1()} to within $5 \times 10^{-7}$. An independent Monte Carlo cross-check using $N = 10^7$ multivariate normal draws at non-zero thresholds ($q \in \{-1, -0.5, 0.5, 1\}$) confirms agreement to within $3 \times 10^{-4}$, within two Monte Carlo standard errors. The validation code is provided in \texttt{code/R/independent\_crosscheck.R}; its per-scenario results are archived in \texttt{simulation\_results/independent\_crosscheck.csv}.

\section{Simulation Studies}
\label{sec:simulation}

We conducted Monte Carlo simulations across six representative scenarios (each evaluated at 99 quantiles with 1000 replications; $n = 594{,}000$ scenario--quantile--replication observations) to evaluate the numerical accuracy and computational efficiency of the proposed dimension-reduction method. We compared our approach against two standard alternatives in mediation analysis: the Delta method and parametric Monte Carlo simulation.

\subsection{Simulation Design}

We examined six scenarios spanning independence to strong correlation ($\rho \in \{0, 0.3, 0.7\}$), mixed-sign correlations, non-zero means, and unequal variances:

\begin{itemize}
    \item \textbf{Scenario 1 (Independent):} Independent standard normal variables ($\bs{\mu} = \mathbf{0}, \bs{\Sigma} = \mathbf{I}_3$).
    \item \textbf{Scenario 2 (Moderate Correlation):} Symmetric correlation structure with $\rho = 0.3$.
    \item \textbf{Scenario 3 (High Correlation):} Strong positive correlations ($\rho = 0.7$).
    \item \textbf{Scenario 4 (Mixed-Sign Correlation):} Mixed signs in correlations ($\rho_{XY} = 0.5, \rho_{XZ} = -0.3, \rho_{YZ} = 0.2$) with $\bs{\mu} = \mathbf{0}$. All three coefficients are zero, so this scenario varies the correlation pattern only and does not represent inconsistent mediation.
    \item \textbf{Scenario 5 (Non-Zero Means):} $\bs{\mu} = [0.5, 0.3, 0.2]^{\T}$, with uniform correlation $\rho = 0.2$.
    \item \textbf{Scenario 6 (Unequal Variances):} Heterogeneous variances ($\sigma_X^2 = 2.0, \sigma_Y^2 = 1.0, \sigma_Z^2 = 0.5$) with covariances $\sigma_{XY} = 0.4$, $\sigma_{XZ} = 0.3$, $\sigma_{YZ} = 0.2$.
\end{itemize}

\subsection{Evaluation Metrics and Benchmark}

To establish a high-precision reference benchmark, we computed cumulative probabilities using $N = 10^8$ Monte Carlo draws per scenario and quantile. For an individual CDF estimate, the worst-case standard error of the indicator average is $1/(2\sqrt{N}) = 5 \times 10^{-5}$, with exact standard error depending on the underlying probability. The reference contains Monte Carlo sampling variability and serves as an empirical benchmark rather than a deterministic upper bound.

We quantified accuracy via the mean absolute error (MAE) across 99 quantiles spanning $q \in \{0.01, 0.02, \dots, 0.99\}$. Each scenario--quantile cell was evaluated over 1000 replications: the parametric Monte Carlo comparator was redrawn with a deterministic seed and timings were recorded on every replication, whereas the proposed method and the Delta method are deterministic given $(\bs{\mu}, \bs{\Sigma})$.

For the parametric Monte Carlo comparator, we drew 5000 samples directly from $(\bs{\mu}, \bs{\Sigma})$ and computed the empirical CDF, matching the resolution commonly used for Monte Carlo confidence intervals in mediation analysis \citep{preacher2008}. The Delta method served as an instantaneous baseline. The simulation evaluates the diagonal-only form of $\bs{\Sigma}_\beta$ as its baseline; \ref{app:delta} provides the full covariance formulation.

\subsection{Results}

\subsubsection{Numerical Accuracy}

Table~\ref{tab:accuracy} reports the per-scenario maximum absolute error for each method alongside overall summary statistics. Across all scenarios and quantiles ($n = 594{,}000$ observations), the proposed dimension-reduction method achieved a mean absolute error of $\maeProposed$ relative to the $10^8$-sample benchmark. Because replications within each cell share a cached reference value, benchmark uncertainty is governed by the reference sampling error ($\approx 3.0 \times 10^{-5}$ at $N = 10^{8}$) rather than a row-level standard error of the mean. The maximum observed error across all evaluations was $1.3 \times 10^{-4}$.

In contrast, the Delta method exhibited substantial error: its mean absolute error of $0.1717$ is roughly 5500-fold larger than that of the proposed method, consistent with its documented limitations in mediation analysis \citep{mackinnon2002}. The maximum error reached $0.4548$ in the Independent scenario ($\bs{\mu} = \mathbf{0}$) and was similarly elevated ($0.44\text{--}0.45$) across all other zero-mean scenarios, including Mixed-Sign Correlation. In contrast, the non-zero mean scenario (Non-Zero Means, $\bs{\mu} = [0.5, 0.3, 0.2]^{\T}$) reduced that maximum by more than half, to $0.2030$. This pattern confirms that large approximation errors occur whenever multiple path coefficients approach zero, as established by the gradient-collapse theory in \ref{app:delta}.

The parametric Monte Carlo method ($5000$ draws) achieved a mean absolute error of $0.0042$, representing a 41-fold improvement over the Delta method but remaining 136-fold less accurate than the proposed approach. Its accuracy is constrained by Monte Carlo sampling error, which decreases at the standard $1/\sqrt{B}$ rate.

\begin{table}[ht]
\centering
\caption{Maximum Absolute Error in CDF Estimation by Scenario\CAPNOTE{Per-scenario \emph{maximum} absolute error; the headline $\maeProposed$ is the overall \emph{mean} absolute error (MAE) across scenarios and quantiles---the two summaries differ by design.}}
\label{tab:accuracy}
\small
\setlength{\tabcolsep}{4pt}
\begin{tabular}{lccc}
\toprule
Scenario & Proposed Method & Delta Method & Parametric MC ($5000$ draws) \\
\midrule
Independent & $1.3 \times 10^{-4}$ & $0.455$ & $0.029$ \\
Moderate Correlation & $9.8 \times 10^{-5}$ & $0.452$ & $0.026$ \\
High Correlation & $1.1 \times 10^{-4}$ & $0.437$ & $0.029$ \\
Mixed-Sign Correlation & $1.0 \times 10^{-4}$ & $0.450$ & $0.027$ \\
Non-zero Means & $1.0 \times 10^{-4}$ & $0.203$ & $0.029$ \\
Unequal Variances & $1.1 \times 10^{-4}$ & $0.452$ & $0.027$ \\
\bottomrule
\end{tabular}
\end{table}

Figure~\ref{fig:results} examines the Mixed-Sign Correlation scenario in detail. Panel (a) shows the distribution of absolute errors on a base-10 logarithmic scale across more than four decades. The three methods separate into distinct bands: the proposed method lowest (errors near $10^{-5}$--$10^{-4}$), parametric Monte Carlo in the middle (near $10^{-3}$--$10^{-2}$), and the Delta method highest (near $10^{-1}$), reflecting the numerical differences in Table~\ref{tab:accuracy}. Panel (b) traces mean absolute error across quantile levels: the proposed method maintains stable error near $10^{-5}$, whereas the Delta method error remains elevated throughout the distribution ($0.20$ in the lower tail to $0.09$ in the upper tail), and parametric Monte Carlo remains near $3\text{--}5 \times 10^{-3}$. Panel (c) plots signed CDF error (estimate minus benchmark): the proposed and parametric Monte Carlo curves remain centered on zero, while the Delta method error reaches $0.45$ in the lower tail and $0.15$ at the median. Panel (d) displays the proposed method's mean wall-clock runtime per CDF evaluation across scenarios ($0.44\text{--}0.53$ seconds on the cluster hardware of Table~\ref{tab:timing}).

\begin{figure}[htbp]
\centering
\includegraphics[width=\textwidth]{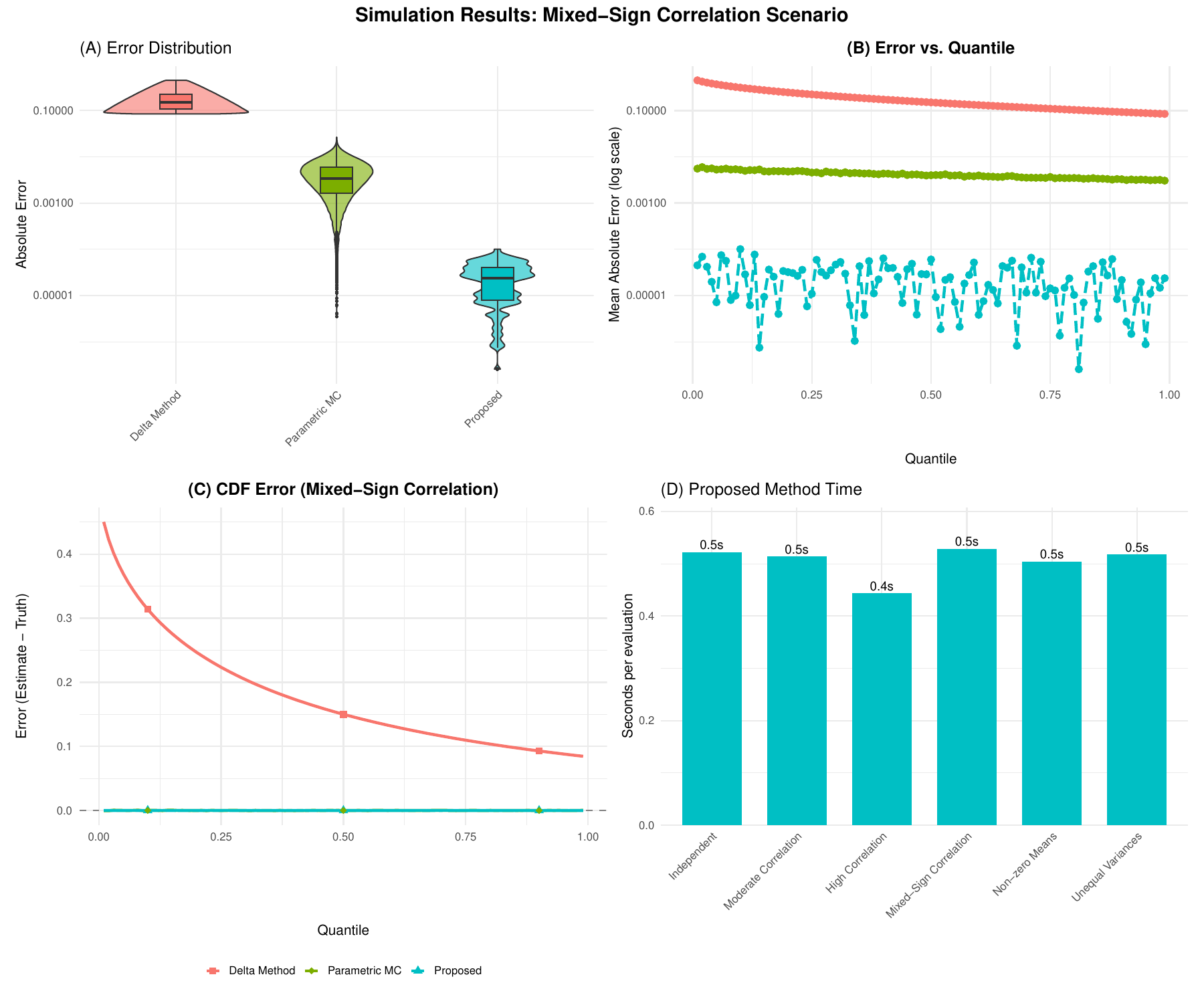}
\caption{Comparison of the proposed dimension-reduction method, the Delta method, and parametric Monte Carlo for the Mixed-Sign Correlation scenario (panels a--c) and timing across all six scenarios (panel d). (a) Distribution of absolute errors on a log scale: the proposed method attains mean errors below $10^{-4}$ while the Delta method exhibits large approximation error (overall mean absolute error $\maeDelta$). (b) Mean absolute error by quantile level. (c) Signed CDF error (estimate $-$ benchmark) by quantile level. (d) Proposed method's mean time per CDF evaluation by scenario; cross-method timings appear in Table~\ref{tab:timing}.}
\label{fig:results}
\end{figure}

\subsubsection{Computational Efficiency}

Table~\ref{tab:timing} presents the computational cost for each method, measured on the UNM CARC Hopper cluster (see Section~\ref{subsec:software} for standalone laptop benchmarks). The proposed dimension-reduction method required an average of 0.51 seconds per CDF evaluation on this cluster environment. While slower than the Delta method ($< 1$ millisecond), the proposed method is computationally efficient compared to procedures of matching accuracy: obtaining equivalent precision via Monte Carlo integration requires $\approx 10^8$ draws, and nonparametric bootstrap confidence intervals require \timeBootstrapEst{} seconds on laptop hardware (Section~\ref{subsec:software}).

The parametric Monte Carlo comparator in our simulation drew directly from the known model and required only milliseconds per evaluation. In applied settings, however, simulation- and resampling-based confidence intervals scale linearly with the number of draws or resamples $B$, a cost that compounds in iterative procedures such as power analysis or Bayesian posterior predictive checks: at the measured \timeBootstrapEst{} seconds per bootstrap interval (Section~\ref{subsec:software}), a power analysis with $10^4$ replications would require roughly 10--28 days of serial computation. For comparison, evaluating the CDF at 100 quantiles requires approximately 51 seconds with the proposed method, about one minute of computation for a full distributional summary.

\begin{table}[ht]
\centering
\caption{Computational Time Comparison by Method and Scenario\CAPNOTE{Per-CDF mean wall time; Delta is an analytic variance formula (sub-millisecond), so it is not a speed competitor --- the proposed method's value is accuracy, not speed.}}
\label{tab:timing}
\begin{tabular}{lccc}
\toprule
Scenario & Proposed & Delta & Parametric MC \\
\midrule
Independent & $522.1$ ms & $<0.1$ ms & $1.5$ ms \\
Moderate Correlation & $513.9$ ms & $<0.1$ ms & $1.5$ ms \\
High Correlation & $443.8$ ms & $<0.1$ ms & $1.5$ ms \\
Mixed-Sign Correlation & $528.5$ ms & $<0.1$ ms & $1.5$ ms \\
Non-zero Means & $504.2$ ms & $<0.1$ ms & $1.5$ ms \\
Unequal Variances & $518.0$ ms & $<0.1$ ms & $1.5$ ms \\
\bottomrule
\end{tabular}
\end{table}

\subsubsection{Practical Implications}

For applied research, these findings support four conclusions:
\begin{enumerate}
    \item The Delta method should be avoided for products of three coefficients when two or more coefficients may be small or near zero, where gradient collapse causes substantial undercoverage and CDF errors exceeding $0.17$.

    \item The proposed method agrees with the $10^8$-sample benchmark to a mean absolute discrepancy of $\maeProposed$ (maximum observed $1.3 \times 10^{-4}$), comparable to the benchmark's own sampling error ($\approx 3.0 \times 10^{-5}$); these are observed discrepancies from a Monte Carlo reference, not a guaranteed error bound. This accuracy supports hypothesis tests and confidence intervals at conventional significance levels ($\alpha = 0.05$); the coverage study of Section~\ref{subsec:coverage} evaluates the induced intervals directly.

    \item Evaluating the CDF across 19 quantiles for distributional plotting and interval formation completes in under one minute on standard hardware.

    \item Parametric Monte Carlo serves as a useful diagnostic tool but is inefficient as a primary estimation method in large-scale simulation or iterative workflows.
\end{enumerate}

The proposed method's per-scenario maximum absolute error remained between $9.8 \times 10^{-5}$ and $1.3 \times 10^{-4}$ across all six scenarios (Table~\ref{tab:accuracy}), including the heavy-tailed Mixed-Sign Correlation scenario (excess kurtosis $32.5$; Section~\ref{subsec:edge}). This consistency supports inference in sequential mediation models across the regimes studied here; the one known exception is near-singular $\bs{\Sigma}$, where the adaptive error control can fail silently and the cross-checks of Section~\ref{subsubsec:cubature} remain necessary.

\subsection{Coverage of the Induced Confidence Intervals}
\label{subsec:coverage}

We evaluate the empirical coverage of 95\% plug-in quantile intervals: the $\alpha/2$ and $1-\alpha/2$ quantiles of the product distribution evaluated at the estimated mean vector and covariance matrix, computed with the proposed CDF. The simulation adheres to the ADEMP framework \citep{morris2019}. Coefficient estimates were drawn from the asymptotic sampling distribution $\hat{\bs{\beta}} \sim \mathcal{N}_3(\bs{\beta}, \bs{\Sigma}_{\beta}/N)$, where $\hat{\bs{\beta}} = (\hat\beta_1, \hat\beta_2, \hat\beta_3)^{\T}$ is the path-coefficient estimator vector (\ref{app:delta}). The factorial design crosses the six simulation scenarios of Table~\ref{tab:accuracy} and a regular anchor scenario ($\bs{\mu} = (1,1,1)^{\T}$, $\theta = 1$) with three sample sizes ($N \in \{50, 200, 1000\}$) and two covariance structures (known $\bs{\Sigma}_{\beta}$ versus sample-estimated Wishart covariance), yielding \covCells{} cells. Each cell was evaluated over $B = \covB$ replicates, yielding a Monte Carlo standard error of at most \covMCSE{} at nominal $95\%$ coverage. We compared four interval methods across all replicates: the proposed plug-in quantile interval obtained via root-finding (Section~\ref{subsec:software}), the Monte Carlo interval of \citet{mackinnon2004} with $10^4$ draws, a projected-Wald ellipsoid interval (a conservative Scheff\'e-type construction included for validity corroboration), and the first-order Delta-method Wald interval. Replications used deterministic seeds; total execution required approximately \covCoreHours{} core-hours, with a numerical failure rate under \covFailMaxPct{}\%. Table~\ref{tab:coverage} reports coverage under the conservative failure-as-miss convention.

Three principal findings emerge from Table~\ref{tab:coverage}:
\begin{enumerate}
    \item \textbf{Boundary Robustness:} At the non-regular boundary point $\bs{\mu} = \bs{0}$, where the first-order asymptotic variance collapses (\ref{app:delta}), the plug-in quantile interval achieved empirical coverage between \covTImuZeroLo{} and \covTImuZeroHi{} across all thirty zero-mean cells, above the nominal level in every cell. The ellipsoid benchmark corroborates this validity ($[\covElMuZeroLo, \covElMuZeroHi]$ across the same cells). The Monte Carlo interval of \citet{mackinnon2004}, which takes quantiles of the same plug-in distribution by simulation rather than through the proposed CDF, shows the same elevated coverage (the two differ by at most \covMCTIcovDiff{} in every cell; see below), so the elevation is unlikely to be a numerical artifact of the CDF. We conjecture that it reflects the non-regularity of the product at $\bs{\mu} = \bs{0}$ but do not prove this mechanism here.
    \item \textbf{Asymptotic Regularity:} Outside the zero-mean boundary, the observed coverage of the plug-in quantile interval was near nominal at every sample size, including $N = 50$ (between \covTIminAll{} and \covTInbHi{}). No cell fell materially below nominal: the minimum, \covTIminAll{} (regular anchor, $N = 50$, estimated covariance), has Clopper--Pearson interval $[\covTIminAllCPlo, \covTIminAllCPhi]$, which contains $0.95$. In contrast, the Delta method degrades at both extremes of the parameter space. Under the sharp null ($\bs{\mu} = \bs{0}$), it covered in all \covB{} replicates due to test-statistic bounding (Remark~\ref{rem:plugin}). Near the boundary (the Non-Zero Means scenario, $\theta = 0.03$), gradient collapse manifests as severe undercoverage (\covDeltaNBfifty{} at $N = 50$), recovering to \covDeltaNBthousand{} only at $N = 1000$. The plug-in quantile interval's observed coverage stays near the nominal level (\covTInbLo--\covTInbHi{}) across all sample sizes.
    \item \textbf{Monte Carlo Concordance:} Plug-in quantile intervals closely match the Monte Carlo intervals of \citet{mackinnon2004}: empirical coverage differed by at most \covMCTIcovDiff, mean interval widths differed by at most \covMCTIwidthPct{}\%, and proper interval scores \citep{gneiting2007} differed by at most \covMCTIscorePct{}\%. Both methods take quantiles of the same plug-in product distribution; the proposed interval evaluates them by numerical integration rather than simulation, which removes Monte Carlo noise from the endpoints.
\end{enumerate}

\begin{table}[t]
\centering
\caption{Empirical coverage of nominal 95\% confidence intervals in the
coverage study (\covCells{} cells, $B = \covB$ replicates per cell;
coverage Monte-Carlo standard error $\le \covMCSE$ at $p = 0.95$).
Values follow the worst-case convention, counting numerical failures as
misses (maximum per-cell failure rate $\covFailMaxPct\%$); complete-case
coverage was also tabulated and agrees within $\covCCWCmax$ in every cell.
Zero mean rows report the minimum--maximum across the five zero mean
scenarios. The projected-Wald ellipsoid benchmark, conservative by
construction and used for validity corroboration only, covered within
$[\covElLo, \covElHi]$ across all cells. Clopper--Pearson intervals for
every cell are given in the coverage summary deposited with the data (see the Data availability statement).}
\label{tab:coverage}
\footnotesize
\setlength{\tabcolsep}{4pt}
\begin{tabular}{@{}llcccccc@{}}
\toprule
 & & \multicolumn{2}{c}{Plug-in quantile} & \multicolumn{2}{c}{Monte Carlo} & \multicolumn{2}{c}{Delta} \\
\cmidrule(lr){3-4}\cmidrule(lr){5-6}\cmidrule(lr){7-8}
Scenario class & $N$ & known & est. & known & est. & known & est. \\
\midrule
Zero mean & 50 & 0.996--1.000 & 0.990--1.000 & 0.995--1.000 & 0.990--1.000 & 1.000 & 1.000 \\
 & 200 & 0.993--1.000 & 0.994--1.000 & 0.992--1.000 & 0.994--1.000 & 1.000 & 1.000 \\
 & 1000 & 0.992--1.000 & 0.993--1.000 & 0.991--1.000 & 0.993--1.000 & 1.000 & 1.000 \\
\addlinespace
Near boundary & 50 & 0.945 & 0.943 & 0.945 & 0.941 & 0.857 & 0.860 \\
\quad ($\theta = 0.03$) & 200 & 0.955 & 0.948 & 0.954 & 0.947 & 0.927 & 0.918 \\
 & 1000 & 0.949 & 0.950 & 0.950 & 0.948 & 0.947 & 0.947 \\
\addlinespace
Strong-regular anchor & 50 & 0.954 & 0.942 & 0.952 & 0.941 & 0.943 & 0.931 \\
\quad ($\theta = 1$) & 200 & 0.945 & 0.952 & 0.944 & 0.951 & 0.943 & 0.949 \\
 & 1000 & 0.943 & 0.946 & 0.940 & 0.944 & 0.942 & 0.945 \\
\bottomrule
\end{tabular}
\end{table}

\section{Discussion}
\label{sec:discussion}

The dimension-reduction method developed here extends bivariate normal product theory \citep{cui2016} to the trivariate cumulative distribution function. While \citet{ogasawara2025} characterized the density of $K$ correlated zero-mean normal variables, computing the cumulative distribution function has remained an open problem. By leveraging the conditional Gaussian distribution of $Z$ given $(X,Y)$, we reduce three-dimensional integration over a non-convex product-threshold region to adaptive two-dimensional quadrature over four smooth quadrants.

For applied researchers conducting sequential mediation analysis, this algorithm provides a computationally efficient alternative to the Delta method and nonparametric bootstrapping; under joint normality, it introduces no approximation to the product distribution. Our simulation results confirm that the Delta method is least reliable not when correlations are strong or mixed in sign, but when path coefficients approach zero. Every zero-mean scenario in Table~\ref{tab:accuracy} exhibited maximum error near $\maxDeltaZeroMean$, whereas the non-zero mean scenario \deltaNZMReduction, to $\maxDeltaNonZeroMean$. This pattern empirically validates the gradient-collapse theory in \ref{app:delta}: the first-order asymptotic variance collapses to zero whenever any two path coefficients vanish, irrespective of skewness or correlation structure. Nonparametric bootstrapping offers asymptotically valid inference but incurs heavy computational overhead (\timeBootstrapEst{} seconds per interval at $B = 5000$), compared to \timeCILaptopLo--\timeCILaptopHi{} seconds for the proposed plug-in quantile interval (Section~\ref{subsec:software}). The proposed approach delivers numerical accuracy ($\text{MAE} = \maeProposed$) at routine evaluation speed (\timeEvalLaptopLo--\timeEvalLaptopHi{} seconds per CDF value).

\subsection{Limitations and Extensions}

Four considerations define the scope of these results and suggest future directions:

First, the method is formulated for products of three variables ($K=3$). A chain with $m$ mediators has $K = m + 1$ path coefficients, and in longer chains ($m > 2$, so $K > 3$), conditional marginalization reduces the integral dimension to $K-1 = m$, coinciding with the dimensionality of the density in \citet[Theorem~1]{ogasawara2025}. Evaluating the resulting $(K-1)$-dimensional integral would require higher-dimensional cubature routines or quasi-Monte Carlo integration.

Second, we assume jointly normal coefficient estimators. While maximum-likelihood and least-squares estimators are asymptotically normal under standard regularity conditions, finite-sample non-normality or heavy-tailed mediators may require non-Gaussian extensions. The product-Jacobian representation extends to elliptical distributions \citep[Theorem~3]{ogasawara2025}, and developing conditional dimension reduction for elliptical families remains an open avenue.

Third, the algorithm evaluates the product distribution for a fixed parameter vector $(\bs{\mu}, \bs{\Sigma})$. In practice, these parameters are estimated with sampling error; incorporating second-order corrections or Bayesian posterior integration represents a valuable extension.

Fourth, identifying the sequential indirect effect as $\theta = \beta_1\beta_2\beta_3$ assumes linear additive structural equations. For models with categorical mediators or nonlinear interactions, causal indirect effects must be defined via counterfactual potential outcomes \citep{vanderweele2014c, valeri2013}, and in that setting the triple-product estimand (and hence this method) applies only when those effects reduce, under a specified model, to a product of jointly normal coefficient estimators; extending the method to categorical or nonlinear settings requires such a specification and is left to future work.

\subsection{Software Integration}

We recommend a straightforward two-step workflow for applied researchers analyzing two-mediator sequential chains ($A \to M_1 \to M_2 \to Y$). First, estimate the path coefficients and their asymptotic covariance matrix using structural equation modeling or linear regression. Second, pass the point estimates $\hat{\bs{\mu}}$ and covariance matrix $\hat{\bs{\Sigma}}$ to \texttt{p\_prod3\_v1()} to evaluate plug-in tail probabilities, such as $F_V(0)$ under the estimated distribution, or to form plug-in quantile confidence intervals. These tail probabilities are not exact $p$-values: they are computed at the estimated parameters rather than under a hypothesized null.

An interpretative consideration concerns variable scaling: because $\theta = \beta_1\beta_2\beta_3$ is a product of path coefficients, rescaling an intermediate mediator leaves the product unchanged as units cancel across adjacent links. When comparing indirect effect sizes across studies or metrics, researchers should standardize path coefficients before interpreting relative effect magnitudes \citep{preacher2011a}.

\section{Conclusion}
\label{sec:conclusion}

Sequential mediation analysis investigates mechanisms transmitted through two intermediate variables, targeting the three-path indirect effect $\theta = \beta_1\beta_2\beta_3$. Existing inference methods approximate the distribution of the estimator $\hat\theta = \hat\beta_1\hat\beta_2\hat\beta_3$ through first-order Taylor expansion, resampling, simulation, likelihood-ratio tests, or Bayesian posteriors \citep{taylor2008, tofighi2020}. We developed a dimension-reduction algorithm that evaluates the CDF through an analytic integral representation under joint normality, reducing the problem from a three-dimensional integral to an adaptive two-dimensional quadrature.

The proposed method achieves a mean absolute error of $\maeProposed$ relative to a $10^8$-sample Monte Carlo reference benchmark across diverse covariance structures. In the 42-cell empirical coverage study (Section~\ref{subsec:coverage}), plug-in quantile intervals attained observed coverage near nominal away from the zero-mean boundary and above nominal at it, with no cell materially below nominal at any sample size (minimum \covTIminAll, with a Clopper--Pearson interval that contains $0.95$), resolving the boundary undercoverage that undermines the Delta method in small-to-moderate samples (\covDeltaNBfifty{} at $N = 50$). Evaluating the distribution function in \timeEvalLaptopLo--\timeEvalLaptopHi{} seconds on standard hardware, this approach provides applied researchers with an accurate, computationally efficient foundation for statistical inference in sequential mediation models.

\section*{CRediT authorship contribution statement}
\textbf{Davood Tofighi:} Conceptualization, Methodology, Software, Formal analysis, Investigation, Visualization, Writing -- original draft, Writing -- review \& editing.

\section*{Declaration of competing interest}
The author declares that he has no known competing financial interests or personal relationships that could have appeared to influence the work reported in this paper.

\section*{Funding}
This research did not receive any specific grant from funding agencies in the public, commercial, or not-for-profit sectors.

\section*{Data availability}
The simulation results (594{,}000 per-replication records, the Monte Carlo ground-truth values, and the coverage-study summary), the R implementation, and all code needed to reproduce the results, figures, and tables are deposited in Mendeley Data under DOI~10.17632/vdc44jyj5r.1. No empirical data were used; all reported results derive from simulation.

\section*{Acknowledgements}
The author thanks the University of New Mexico Center for Advanced Research Computing (CARC), supported in part by the National Science Foundation, for the computational resources used in this work.

\section*{Declaration of generative AI and AI-assisted technologies in the writing process}
During the preparation of this manuscript the author used AI-assisted language tools for grammar and language editing. The author reviewed and edited all content and takes full responsibility for the content of this publication. No AI tool is listed as an author.

\appendix

\section{Proof of Theorem~\ref{thm:convergence}}
\label{app:convergence}

Denote the integrand by $h(x, y; v)$, defined for all $(x, y) \in \mathbb{R}^2 \setminus \mathcal{S}$, where $\mathcal{S} = \{(x, y) : xy = 0\}$ represents the coordinate axes.

On $\mathbb{R}^2 \setminus \mathcal{S}$, the factors of $h$ are continuous: $\Phi$ and $f_{X,Y}$ are continuous on $\mathbb{R}^2$, $\text{sgn}(xy)$ is locally constant off $\mathcal{S}$, and $\muZ(x,y)$ is affine in $(x, y)$. Because $\bs{\Sigma} \succ 0$, the conditional standard deviation satisfies $\sigZ > 0$ and is spatially constant. The product $h$ is therefore continuous on $\mathbb{R}^2 \setminus \mathcal{S}$ and Borel measurable. Because $0 \leq \Phi(t) \leq 1$ for all $t \in \mathbb{R}$, we obtain the domination bound:
\begin{equation}
    \label{eq:domination}
    |h(x, y; v)| \;\leq\; f_{X,Y}(x, y),
    \qquad \iint_{\mathbb{R}^2} f_{X,Y}(x,y) \, dy \, dx = 1 < \infty ,
\end{equation}
which establishes that $h \in L^1(\mathbb{R}^2 \setminus \mathcal{S})$. Consequently, $I(v)$ exists and is finite for every $v \in \mathbb{R}$. The excluded set $\mathcal{S}$ has two-dimensional Lebesgue measure zero and carries zero probability mass under the absolutely continuous density $f_{X,Y}$; undefined values of $v/(xy)$ along $\mathcal{S}$ do not alter the value of the integral.

The Dominated Convergence Theorem establishes convergence along any sequence of compact sets $\mathcal{D}_n$ that exhausts $\mathbb{R}^2 \setminus \mathcal{S}$ (for instance, $\mathcal{D}_n = \{(x,y) : |xy| \geq 1/n,\ |x| \leq n,\ |y| \leq n\}$). Because $h \mathbb{1}_{\mathcal{D}_n} \to h$ pointwise off $\mathcal{S}$ and $|h \mathbb{1}_{\mathcal{D}_n}| \leq f_{X,Y} \in L^1(\mathbb{R}^2)$, we have:
\begin{equation}
    \lim_{n \to \infty} \iint_{\mathcal{D}_n} h(x, y; v) \, dy \, dx = \iint_{\mathbb{R}^2} h(x, y; v) \, dy \, dx = I(v) . \qquad \qed
\end{equation}

The domination bound (\ref{eq:domination}) directly bounds the truncation error induced by restricting numerical integration to a compact domain.

\begin{corollary}[Truncation Error]
    \label{cor:truncation}
    Let $\mathcal{D}_c = [\mu_X - c\,\sigma_X,\, \mu_X + c\,\sigma_X] \times [\mu_Y - c\,\sigma_Y,\, \mu_Y + c\,\sigma_Y]$ for $c > 0$, and let $I_c(v) = \iint_{\mathcal{D}_c} h(x,y;v) \, dy \, dx$ denote the correspondingly truncated integral. Then for every $v \in \mathbb{R}$ and every $\bs{\Sigma} \succ 0$,
    \begin{equation}
        \label{eq:trunc_bound}
        \left| I(v) - I_c(v) \right| \;\leq\; P\big((X,Y) \notin \mathcal{D}_c\big) \;\leq\; 4\,\Phi(-c) .
    \end{equation}
\end{corollary}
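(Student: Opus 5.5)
The plan is to combine the domination bound (\ref{eq:domination}) with a union bound over the marginal tails of $X$ and $Y$.

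\textbf{First inequality.} Since $h$ is integrable by Theorem~\ref{thm:convergence}, write
\[
I(v) - I_c(v) = \iint_{\mathbb{R}^2\setminus\mathcal{D}_c} h(x,y;v)\,dy\,dx .
\]
The integrand satisfies $0 \le h \le f_{X,Y}$ because $0\le\Phi\le1$ and $f_{X,Y}\ge 0$; the coordinate axes may be ignored since they are null. Hence
\[
0 \le I(v)-I_c(v) \le \iint_{\mathbb{R}^2\setminus\mathcal{D}_c} f_{X,Y}\,dy\,dx = P\big((X,Y)\notin\mathcal{D}_c\big).
\]
This bound holds for every $v$ and every $\bs{\Sigma}\succ0$, because $h$ depends on $v$ and on $\bs{\Sigma}$ only through the factor $\Phi$, which is bounded by one.

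\textbf{Second inequality.} The complement of the rectangle is contained in the union of two events:
\[
\{(X,Y)\notin\mathcal{D}_c\} \subseteq \{|X-\mu_X|>c\,\sigma_X\}\cup\{|Y-\mu_Y|>c\,\sigma_Y\}.
\]
Apply the union bound. The marginals are $X\sim\mathcal{N}(\mu_X,\sigma_X^2)$ and $Y\sim\mathcal{N}(\mu_Y,\sigma_Y^2)$, so each event has probability $2\Phi(-c)$ by symmetry of the normal distribution. The total is $4\Phi(-c)$, independent of $\rho_{XY}$, which is exactly what makes the bound uniform over $\bs{\Sigma}$.

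\textbf{Main obstacle.} The only point needing care is reconciling coordinates with the algorithm's box. The corollary's box is stated in original units, whereas Algorithm~\ref{alg:cdf_computation} integrates over $[\mu^*_X\pm c]\times[\mu^*_Y\pm c]$ in standardized units. I would note that the standardizing change of variables maps one box onto the other. Because $v^*$ and $\bs{\mu}^*$ are rescaled consistently, the integrand transforms with its Jacobian, so the truncated integrals coincide. The bound therefore applies verbatim to the implementation.
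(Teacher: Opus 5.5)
Your proposal is correct and matches the paper's proof: the domination bound $0 \le h \le f_{X,Y}$ gives the first inequality, and the union bound over the two marginal exceedance events, each of probability $2\Phi(-c)$, gives $4\Phi(-c)$. Your check that the standardizing change of variables maps $\mathcal{D}_c$ onto the algorithm's box $[\mu_X^*\pm c]\times[\mu_Y^*\pm c]$, with equal truncated integrals, is a correct addition that the paper leaves implicit; one correction is that $h$ also depends on $\bs{\Sigma}$ through $f_{X,Y}$, not only through $\Phi$, so the first inequality holds for every $\bs{\Sigma}$ simply because $h \le f_{X,Y}$ always, and the uniformity in $\bs{\Sigma}$ comes from the second inequality.
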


\begin{proof}
    Applying the domination bound (\ref{eq:domination}) yields:
    \begin{equation}
        \left| I(v) - I_c(v) \right| = \left| \iint_{\mathbb{R}^2 \setminus \mathcal{D}_c} h(x,y;v) \, dy \, dx \right| \leq \iint_{\mathbb{R}^2 \setminus \mathcal{D}_c} f_{X,Y}(x,y) \, dy \, dx = P\big((X,Y) \notin \mathcal{D}_c\big) .
    \end{equation}
    The complement $\mathbb{R}^2 \setminus \mathcal{D}_c$ is contained in the union of the marginal exceedance events $\{|X - \mu_X| > c\,\sigma_X\}$ and $\{|Y - \mu_Y| > c\,\sigma_Y\}$. Under bivariate normality, each marginal distribution is univariate Gaussian with $P(|X - \mu_X| > c\,\sigma_X) = 2\Phi(-c)$. By Boole's inequality (the union bound), $P((X,Y) \notin \mathcal{D}_c) \leq 2\Phi(-c) + 2\Phi(-c) = 4\Phi(-c)$. \qed
\end{proof}

The bound (\ref{eq:trunc_bound}) is uniform in $v$ and independent of the correlation structure in $\bs{\Sigma}$; Section~\ref{subsec:stability} applies this bound to justify the domain half-width $c = 8$ used in numerical quadrature.

\begin{remark}[Symmetry and Tail Weight of $V$]
\label{rem:moments}
Two exact distributional properties characterize the product $V = XYZ$. First, when $\bs{\mu} = \bs{0}$, the joint distribution of $(X,Y,Z)$ is invariant under the sign reflection $(X,Y,Z) \mapsto (-X,-Y,-Z)$, which maps $V$ to $-V$. The distribution of $V$ is therefore symmetric about zero, and all odd central moments vanish for every $\bs{\Sigma} \succ 0$, including mixed-sign correlation structures. Second, the even moments admit exact closed forms. Let $\rho_{ij}$ denote the pairwise correlations, and define $e_1 = \rho_{12}^2 + \rho_{13}^2 + \rho_{23}^2$, $e_2 = \rho_{12}^2\rho_{13}^2 + \rho_{12}^2\rho_{23}^2 + \rho_{13}^2\rho_{23}^2$, and $p = \rho_{12}\rho_{13}\rho_{23}$. Evaluating the Gaussian product-moment pairings \citep[\S2.6.2]{anderson2003} on the correlation scale yields:
\begin{align*}
    E[V^2] &= 1 + 2e_1 + 8p , \\
    E[V^4] &= 27 + 216 e_1 + 72 e_1^2 + 720 e_2 + 1728 p + 1152 p\, e_1 + 1728 p^2 .
\end{align*}
The excess kurtosis $\gamma_2 = E[V^4] / (E[V^2])^2 - 3$ is scale-invariant and holds for all $\bs{\Sigma} \succ 0$. Under independence ($\bs{\Sigma} = \mathbf{I}_3$), $\gamma_2 = 24$. In the Mixed-Sign Correlation scenario of Section~\ref{sec:simulation} ($\rho_{12} = 0.5, \rho_{13} = -0.3, \rho_{23} = 0.2$), we obtain $E[V^2] = 38/25$, $E[V^4] = 51282/625$, and $\gamma_2 = 23475/722 \approx 32.5$, as reported in Section~\ref{subsec:edge}.
\end{remark}

\section{Degeneracy of the Delta-Method Variance for the Triple Product}
\label{app:delta}

In sequential mediation models ($A \to M_1 \to M_2 \to Y$), the three-path indirect effect is the product of three path coefficients $\theta = \beta_1 \beta_2 \beta_3$. Let $\hat{\bs{\beta}} = (\hat\beta_1, \hat\beta_2, \hat\beta_3)^\T$ denote the coefficient estimators, with asymptotic distribution $\hat{\bs{\beta}} \sim \mathcal{N}_3(\bs{\beta}, \bs{\Sigma}_{\beta}/N)$, where $\bs{\Sigma}_{\beta}$ is the asymptotic covariance of $\sqrt{N}(\hat{\bs{\beta}} - \bs{\beta})$, as in Section~\ref{subsec:coverage}. Setting $(X, Y, Z) \equiv (\hat\beta_1, \hat\beta_2, \hat\beta_3)$ and $\bs{\Sigma} \equiv \bs{\Sigma}_{\beta}/N$, the point estimator $\hat\theta = \hat\beta_1 \hat\beta_2 \hat\beta_3$ coincides with the random product $V = XYZ$. Writing $g(\bs{\beta}) = \beta_1 \beta_2 \beta_3$, the first-order Taylor series (Delta method) approximates the sampling variance as:
\begin{equation}
    \Var(\hat\theta) \approx N^{-1} \nabla g(\bs{\beta})^\T \bs{\Sigma}_{\beta} \, \nabla g(\bs{\beta}) ,
    \qquad
    \nabla g(\bs{\beta}) = (\beta_2 \beta_3, \ \beta_1 \beta_3, \ \beta_1 \beta_2)^\T .
\end{equation}

\begin{lemma}[Gradient Collapse of the Triple Product]
    \label{lem:delta}
    The gradient $\nabla g(\bs{\beta})$ equals $\bs{0}$ if and only if at least two of the three coefficients $\beta_1, \beta_2, \beta_3$ equal zero. Consequently, the first-order asymptotic variance $\nabla g^\T \bs{\Sigma}_{\beta} \nabla g$ vanishes whenever any two coefficients are zero, for every positive-definite covariance matrix $\bs{\Sigma}_{\beta}$.
\end{lemma}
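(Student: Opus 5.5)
The plan is a direct two-direction argument on the components of $\nabla g(\bs{\beta}) = (\beta_2\beta_3,\ \beta_1\beta_3,\ \beta_1\beta_2)^\T$, followed by a one-line consequence for the quadratic form.

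\emph{Sufficiency.} Suppose two coefficients vanish, say $\beta_i = \beta_j = 0$ with $i \neq j$, and let $k$ be the remaining index. Each component of $\nabla g$ is the product of the two coefficients other than one index. Any pair drawn from $\{1,2,3\}$ must contain at least one of $i$ or $j$. Hence every component contains a zero factor, and $\nabla g(\bs{\beta}) = \bs{0}$. This holds whatever the value of $\beta_k$, and covers the case where all three vanish.

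\emph{Necessity.} I will argue by contraposition. Suppose at most one coefficient is zero. If none is zero, every pairwise product is nonzero, so $\nabla g \neq \bs{0}$. If exactly one is zero, say $\beta_k = 0$ with the other two nonzero, then the $k$-th component is the product of the two nonzero coefficients. That component is nonzero, so again $\nabla g \neq \bs{0}$. The index bookkeeping here is the only place care is needed: component $k$ is exactly the product excluding index $k$. Together the two directions give the equivalence.

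\emph{Consequence.} If two coefficients are zero, then $\nabla g = \bs{0}$, so $\nabla g^\T \bs{\Sigma}_{\beta} \nabla g = \bs{0}^\T \bs{\Sigma}_{\beta} \bs{0} = 0$, regardless of $\bs{\Sigma}_{\beta}$; positive-definiteness is not even used. For completeness I may also note the converse: by positive-definiteness, the quadratic form is strictly positive whenever $\nabla g \neq \bs{0}$. So the first-order variance vanishes exactly on the set where at least two coefficients are zero. There is no real obstacle; the argument is elementary, and the only step needing attention is the case split in the necessity direction.
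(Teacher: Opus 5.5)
Your proposal is correct and matches the paper's proof. Sufficiency is the same zero-factor observation. Your contrapositive case split for necessity is equivalent to the paper's contradiction argument: at most one zero forces two nonzero coefficients, so the gradient component equal to their product, $\beta_i\beta_j$, is nonzero. The quadratic form then vanishes trivially. Your added converse via positive-definiteness is a correct bonus; the paper addresses it only informally, in its sharpness remark after the proof.
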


\begin{proof}
    ($\Leftarrow$) If at least two coefficients vanish, say $\beta_i = \beta_j = 0$ for $i \neq j$, each component of $\nabla g(\bs{\beta})$ contains at least one of $\beta_i$ or $\beta_j$ as a multiplicative factor and therefore equals zero, yielding $\nabla g(\bs{\beta}) = \bs{0}$.

    ($\Rightarrow$) Conversely, suppose $\nabla g(\bs{\beta}) = \bs{0}$, so that $\beta_2 \beta_3 = \beta_1 \beta_3 = \beta_1 \beta_2 = 0$. If fewer than two coefficients were zero, at least two coefficients would be non-zero (for instance, $\beta_1 \neq 0$ and $\beta_2 \neq 0$), which implies $\beta_1 \beta_2 \neq 0$ and contradicts the third condition. Hence, at least two coefficients must vanish.

    When $\nabla g(\bs{\beta}) = \bs{0}$, the quadratic form $\nabla g^\T \bs{\Sigma}_{\beta} \nabla g$ evaluates to zero for every covariance matrix $\bs{\Sigma}_{\beta}$, collapsing the first-order variance to zero. \qed
\end{proof}

The two-coefficient threshold is sharp. If only one coefficient approaches zero (for example, $\beta_3 \to 0$ with $\beta_1, \beta_2 \neq 0$), the gradient converges to $(0, 0, \beta_1 \beta_2)^\T \neq \bs{0}$, and the asymptotic variance remains strictly positive: $(\beta_1 \beta_2)^2 \, \Var(\hat\beta_3) > 0$. Variance collapse is therefore specific to configurations with two or more small coefficients. In that parameter regime, the true first-order variance shrinks to zero while the sample estimator $\hat\theta$ remains stochastic, rendering the Wald statistic $\hat\theta / \widehat{\mathrm{se}}(\hat\theta)$ and its confidence intervals poorly calibrated. Section~\ref{subsec:coverage} demonstrates the resulting boundary undercoverage empirically.

\begin{remark}[Null Behavior of the Plug-in Wald Statistic]
\label{rem:plugin}
Lemma~\ref{lem:delta} evaluates the asymptotic variance at the true parameter $\bs{\beta}$. In empirical practice, researchers substitute the sample estimates $\hat{\bs{\beta}}$, for which $\nabla g(\hat{\bs{\beta}}) \neq \bs{0}$ almost surely under continuous sampling distributions. The plug-in variance does not vanish, but its behavior under the complete null $\bs{\beta} = \bs{0}$ reveals extreme conservatism. Under diagonal covariance $\bs{\Sigma}_{\beta}/N = \diag(\sigma_1^2, \sigma_2^2, \sigma_3^2)$, where $\sigma_i$ denotes the standard error of $\hat\beta_i$, the plug-in variance is $\widehat{\mathrm{se}}^2(\hat\theta) = \hat\theta^2 \sum_{i=1}^3 (\sigma_i / \hat\beta_i)^2$. The Wald statistic $T = \hat\theta / \widehat{\mathrm{se}}(\hat\theta)$ satisfies $T^{-2} = \sum_{i=1}^3 (\sigma_i / \hat\beta_i)^2$, which implies the upper bound $|T| \le \min_{i} |\hat\beta_i| / \sigma_i$. Rejecting the null hypothesis at significance level $\alpha$ therefore requires all three coefficients to be individually significant at level $\alpha$.

Because the estimators $\hat\beta_i$ are independent under diagonal covariance, the null rejection rate cannot exceed $\alpha^3$ ($1.25 \times 10^{-4}$ at $\alpha = 0.05$). Under joint normality, this rejection rate admits an exact closed form. Each term $(\sigma_i / \hat\beta_i)^2$ has CDF $2\Phi(-x^{-1/2})$, which is the one-sided L\'evy distribution with stability index $1/2$. The sum of three independent standard L\'evy variables follows a L\'evy distribution scaled by $3^2 = 9$ \citep[Example~1.3 and Eq.~(1.8)]{nolan2020}. The rejection probability $\{T^2 > z_{\alpha/2}^2\}$ is therefore:
\begin{equation}
    P\left(\sum_{i=1}^3 \frac{\sigma_i^2}{\hat\beta_i^2} < z_{\alpha/2}^{-2}\right) = 2\,\Phi(-3\,z_{\alpha/2}) = 4.1 \times 10^{-9}
\end{equation}
at $\alpha = 0.05$, where $z_{\alpha/2} \approx 1.96$ is the standard normal critical value. The plug-in Wald interval thus exhibits severe overcoverage under the sharp null ($\bs{\beta} = \bs{0}$), explaining the zero-rejection rates observed across simulation replications \citep{mackinnon2002}. While this conservatism prevents false positives at the origin, the underlying gradient collapse produces severe undercoverage near parameter boundaries (Section~\ref{subsec:coverage}).
\end{remark}

\bibliographystyle{elsarticle-harv}
\bibliography{references}

\begin{thebibliography}{45}
\expandafter\ifx\csname natexlab\endcsname\relax\def\natexlab#1{#1}\fi
\providecommand{\url}[1]{\texttt{#1}}
\providecommand{\href}[2]{#2}
\providecommand{\path}[1]{#1}
\providecommand{\DOIprefix}{doi:}
\providecommand{\ArXivprefix}{arXiv:}
\providecommand{\URLprefix}{URL: }
\providecommand{\Pubmedprefix}{pmid:}
\providecommand{\doi}[1]{\href{http://dx.doi.org/#1}{\path{#1}}}
\providecommand{\Pubmed}[1]{\href{pmid:#1}{\path{#1}}}
\providecommand{\bibinfo}[2]{#2}
\ifx\xfnm\relax \def\xfnm[#1]{\unskip,\space#1}\fi
\bibitem[{Anderson(2003)}]{anderson2003}
\bibinfo{author}{Anderson, T.W.}, \bibinfo{year}{2003}.
\newblock \bibinfo{title}{An Introduction to Multivariate Statistical
  Analysis}.
\newblock \bibinfo{edition}{3rd} ed., \bibinfo{publisher}{Wiley},
  \bibinfo{address}{Hoboken, NJ}.
\bibitem[{Andrews(2000)}]{andrews2000}
\bibinfo{author}{Andrews, D.W.K.}, \bibinfo{year}{2000}.
\newblock \bibinfo{title}{Inconsistency of the bootstrap when a parameter is on
  the boundary of the parameter space}.
\newblock \bibinfo{journal}{Econometrica} \bibinfo{volume}{68},
  \bibinfo{pages}{399--405}.
\newblock \DOIprefix\doi{10.1111/1468-0262.00114}.
\bibitem[{Axler(2020)}]{axler2020}
\bibinfo{author}{Axler, S.}, \bibinfo{year}{2020}.
\newblock \bibinfo{title}{Measure, Integration \& Real Analysis}. volume
  \bibinfo{volume}{282} of \textit{\bibinfo{series}{Graduate Texts in
  Mathematics}}.
\newblock \bibinfo{publisher}{Springer International Publishing},
  \bibinfo{address}{Cham}.
\newblock \DOIprefix\doi{10.1007/978-3-030-33143-6}.
\bibitem[{Bakhvalov(1959)}]{bakhvalov1959}
\bibinfo{author}{Bakhvalov, N.S.}, \bibinfo{year}{1959}.
\newblock \bibinfo{title}{On the approximate calculation of multiple
  integrals}.
\newblock \bibinfo{journal}{Vestnik Moskovskogo Universiteta, Seriya
  Matematiki, Mehaniki, Astronomii, Fiziki, Himii} \bibinfo{volume}{4},
  \bibinfo{pages}{3--18}.
\newblock \bibinfo{note}{In Russian. English translation: Journal of Complexity
  31 (2015) 502--516, doi:10.1016/j.jco.2014.12.003}.
\bibitem[{Berntsen et~al.(1991)Berntsen, Espelid and Genz}]{berntsen1991}
\bibinfo{author}{Berntsen, J.}, \bibinfo{author}{Espelid, T.O.},
  \bibinfo{author}{Genz, A.}, \bibinfo{year}{1991}.
\newblock \bibinfo{title}{An adaptive algorithm for the approximate calculation
  of multiple integrals}.
\newblock \bibinfo{journal}{ACM Transactions on Mathematical Software}
  \bibinfo{volume}{17}, \bibinfo{pages}{437--451}.
\newblock \DOIprefix\doi{10.1145/210232.210233}.
\bibitem[{Bickel and Freedman(1981)}]{bickel1981}
\bibinfo{author}{Bickel, P.J.}, \bibinfo{author}{Freedman, D.A.},
  \bibinfo{year}{1981}.
\newblock \bibinfo{title}{Some asymptotic theory for the bootstrap}.
\newblock \bibinfo{journal}{Annals of Statistics} \bibinfo{volume}{9},
  \bibinfo{pages}{1196--1217}.
\newblock \DOIprefix\doi{10.1214/aos/1176345637}.
\bibitem[{Bohrnstedt and Goldberger(1969)}]{bohrnstedt1969}
\bibinfo{author}{Bohrnstedt, G.W.}, \bibinfo{author}{Goldberger, A.S.},
  \bibinfo{year}{1969}.
\newblock \bibinfo{title}{On the exact covariance of products of random
  variables}.
\newblock \bibinfo{journal}{Journal of the American Statistical Association}
  \bibinfo{volume}{64}, \bibinfo{pages}{1439--1442}.
\newblock \DOIprefix\doi{10.2307/2286081},
  \href{http://arxiv.org/abs/2286081}{{\tt arXiv:2286081}}.
\bibitem[{Craig(1936)}]{craig1936}
\bibinfo{author}{Craig, C.C.}, \bibinfo{year}{1936}.
\newblock \bibinfo{title}{On the frequency function of xy}.
\newblock \bibinfo{journal}{The Annals of Mathematical Statistics}
  \bibinfo{volume}{7}, \bibinfo{pages}{1--15}.
\newblock \DOIprefix\doi{10.1214/aoms/1177732541}.
\bibitem[{Cui et~al.(2016)Cui, Yu, Iommelli and Kong}]{cui2016}
\bibinfo{author}{Cui, G.}, \bibinfo{author}{Yu, X.}, \bibinfo{author}{Iommelli,
  S.}, \bibinfo{author}{Kong, L.}, \bibinfo{year}{2016}.
\newblock \bibinfo{title}{Exact distribution for the product of two correlated
  gaussian random variables}.
\newblock \bibinfo{journal}{IEEE Signal Processing Letters}
  \bibinfo{volume}{23}, \bibinfo{pages}{1662--1666}.
\newblock \DOIprefix\doi{10.1109/LSP.2016.2614539}.
\bibitem[{Davis and Rabinowitz(1984)}]{davis1984}
\bibinfo{author}{Davis, P.J.}, \bibinfo{author}{Rabinowitz, P.},
  \bibinfo{year}{1984}.
\newblock \bibinfo{title}{Methods of Numerical Integration}.
\newblock \bibinfo{edition}{2nd} ed., \bibinfo{publisher}{Academic Press}.
\bibitem[{Drezner(1994)}]{drezner1994}
\bibinfo{author}{Drezner, Z.}, \bibinfo{year}{1994}.
\newblock \bibinfo{title}{Computation of the trivariate normal integral}.
\newblock \bibinfo{journal}{Mathematics of Computation} \bibinfo{volume}{62},
  \bibinfo{pages}{289--294}.
\newblock \DOIprefix\doi{10.2307/2153409},
  \href{http://arxiv.org/abs/2153409}{{\tt arXiv:2153409}}.
\bibitem[{Gaunt and Ye(2025)}]{gaunt2025}
\bibinfo{author}{Gaunt, R.E.}, \bibinfo{author}{Ye, Z.}, \bibinfo{year}{2025}.
\newblock \bibinfo{title}{Asymptotic approximations for the distribution of the
  product of correlated normal random variables}.
\newblock \bibinfo{journal}{Journal of Mathematical Analysis and Applications}
  \bibinfo{volume}{543}, \bibinfo{pages}{128987}.
\newblock \DOIprefix\doi{10.1016/j.jmaa.2024.128987}.
\bibitem[{Genz and Bretz(2009)}]{genz2009}
\bibinfo{author}{Genz, A.}, \bibinfo{author}{Bretz, F.}, \bibinfo{year}{2009}.
\newblock \bibinfo{title}{Computation of Multivariate Normal and t
  Probabilities}.
\newblock Lecture Notes in Statistics, \bibinfo{publisher}{Springer-Verlag},
  \bibinfo{address}{Heidelberg}.
\newblock \DOIprefix\doi{10.1007/978-3-642-01689-9}.
\bibitem[{Genz and Malik(1980)}]{genzmalik1980}
\bibinfo{author}{Genz, A.C.}, \bibinfo{author}{Malik, A.A.},
  \bibinfo{year}{1980}.
\newblock \bibinfo{title}{Remarks on algorithm 006: An adaptive algorithm for
  numerical integration over an {N}-dimensional rectangular region}.
\newblock \bibinfo{journal}{Journal of Computational and Applied Mathematics}
  \bibinfo{volume}{6}, \bibinfo{pages}{295--302}.
\newblock \DOIprefix\doi{10.1016/0771-050X(80)90039-X}.
\bibitem[{Gneiting and Raftery(2007)}]{gneiting2007}
\bibinfo{author}{Gneiting, T.}, \bibinfo{author}{Raftery, A.E.},
  \bibinfo{year}{2007}.
\newblock \bibinfo{title}{Strictly proper scoring rules, prediction, and
  estimation}.
\newblock \bibinfo{journal}{Journal of the American Statistical Association}
  \bibinfo{volume}{102}, \bibinfo{pages}{359--378}.
\newblock \DOIprefix\doi{10.1198/016214506000001437}.
\bibitem[{Gonnet(2012)}]{gonnet2012}
\bibinfo{author}{Gonnet, P.}, \bibinfo{year}{2012}.
\newblock \bibinfo{title}{A review of error estimation in adaptive quadrature}.
\newblock \bibinfo{journal}{ACM Computing Surveys} \bibinfo{volume}{44},
  \bibinfo{pages}{1--36}.
\newblock \DOIprefix\doi{10.1145/2333112.2333117}.
\bibitem[{Goodman(1960)}]{goodman1960}
\bibinfo{author}{Goodman, L.A.}, \bibinfo{year}{1960}.
\newblock \bibinfo{title}{On the exact variance of products}.
\newblock \bibinfo{journal}{Journal of the American Statistical Association}
  \bibinfo{volume}{55}, \bibinfo{pages}{708--713}.
\newblock \DOIprefix\doi{10.2307/2281592},
  \href{http://arxiv.org/abs/2281592}{{\tt arXiv:2281592}}.
\bibitem[{Goodman(1962)}]{goodman1962}
\bibinfo{author}{Goodman, L.A.}, \bibinfo{year}{1962}.
\newblock \bibinfo{title}{The variance of the product of {{K}} random
  variables}.
\newblock \bibinfo{journal}{Journal of the American Statistical Association}
  \bibinfo{volume}{57}, \bibinfo{pages}{54--60}.
\newblock \DOIprefix\doi{10.2307/2282440},
  \href{http://arxiv.org/abs/2282440}{{\tt arXiv:2282440}}.
\bibitem[{Higham(2002)}]{higham2002}
\bibinfo{author}{Higham, N.J.}, \bibinfo{year}{2002}.
\newblock \bibinfo{title}{Accuracy and Stability of Numerical Algorithms}.
\newblock \bibinfo{edition}{2nd} ed., \bibinfo{publisher}{Society for
  Industrial and Applied Mathematics}, \bibinfo{address}{Philadelphia, PA}.
\newblock \DOIprefix\doi{10.1137/1.9780898718027}.
\bibitem[{Lomnicki(1967)}]{lomnicki1967}
\bibinfo{author}{Lomnicki, Z.A.}, \bibinfo{year}{1967}.
\newblock \bibinfo{title}{On the distribution of products of random variables}.
\newblock \bibinfo{journal}{Journal of the Royal Statistical Society. Series B
  (Methodological)} \bibinfo{volume}{29}, \bibinfo{pages}{513--524}.
\newblock \DOIprefix\doi{10.1111/j.2517-6161.1967.tb00713.x}.
\bibitem[{Lyness and Kaganove(1976)}]{lyness1976}
\bibinfo{author}{Lyness, J.N.}, \bibinfo{author}{Kaganove, J.J.},
  \bibinfo{year}{1976}.
\newblock \bibinfo{title}{Comments on the nature of automatic quadrature
  routines}.
\newblock \bibinfo{journal}{ACM Transactions on Mathematical Software}
  \bibinfo{volume}{2}, \bibinfo{pages}{65--81}.
\newblock \DOIprefix\doi{10.1145/355666.355671}.
\bibitem[{MacKinnon(2008)}]{mackinnon2008}
\bibinfo{author}{MacKinnon, D.P.}, \bibinfo{year}{2008}.
\newblock \bibinfo{title}{Introduction to Statistical Mediation Analysis}.
\newblock \bibinfo{publisher}{Lawrence Erlbaum Associates},
  \bibinfo{address}{New York}.
\newblock \DOIprefix\doi{10.4324/9780203809556}.
\bibitem[{MacKinnon et~al.(2002)MacKinnon, Lockwood, Hoffman, West and
  Sheets}]{mackinnon2002}
\bibinfo{author}{MacKinnon, D.P.}, \bibinfo{author}{Lockwood, C.M.},
  \bibinfo{author}{Hoffman, J.M.}, \bibinfo{author}{West, S.G.},
  \bibinfo{author}{Sheets, V.}, \bibinfo{year}{2002}.
\newblock \bibinfo{title}{A comparison of methods to test mediation and other
  intervening variable effects}.
\newblock \bibinfo{journal}{Psychological Methods} \bibinfo{volume}{7},
  \bibinfo{pages}{83--104}.
\newblock \DOIprefix\doi{10.1037/1082-989X.7.1.83}.
\bibitem[{MacKinnon et~al.(2004)MacKinnon, Lockwood and
  Williams}]{mackinnon2004}
\bibinfo{author}{MacKinnon, D.P.}, \bibinfo{author}{Lockwood, C.M.},
  \bibinfo{author}{Williams, J.}, \bibinfo{year}{2004}.
\newblock \bibinfo{title}{Confidence limits for the indirect effect:
  Distribution of the product and resampling methods}.
\newblock \bibinfo{journal}{Multivariate Behavioral Research}
  \bibinfo{volume}{39}, \bibinfo{pages}{99--128}.
\newblock \DOIprefix\doi{10.1207/s15327906mbr3901_4}.
\bibitem[{Meeker and Escobar(1994)}]{meeker1994}
\bibinfo{author}{Meeker, W.Q.}, \bibinfo{author}{Escobar, L.A.},
  \bibinfo{year}{1994}.
\newblock \bibinfo{title}{An algorithm to compute the cdf of the product of two
  normal random variables}.
\newblock \bibinfo{journal}{Communications in Statistics - Simulation and
  Computation} \bibinfo{volume}{23}, \bibinfo{pages}{271--280}.
\newblock \DOIprefix\doi{10.1080/03610919408813168}.
\bibitem[{Morris et~al.(2019)Morris, White and Crowther}]{morris2019}
\bibinfo{author}{Morris, T.P.}, \bibinfo{author}{White, I.R.},
  \bibinfo{author}{Crowther, M.J.}, \bibinfo{year}{2019}.
\newblock \bibinfo{title}{Using simulation studies to evaluate statistical
  methods}.
\newblock \bibinfo{journal}{Statistics in Medicine} \bibinfo{volume}{38},
  \bibinfo{pages}{2074--2102}.
\newblock \DOIprefix\doi{10.1002/sim.8086}.
\bibitem[{Narasimhan et~al.(2025)Narasimhan, Johnson, Hahn, Bouvier and
  Ki{\^e}u}]{cubature}
\bibinfo{author}{Narasimhan, B.}, \bibinfo{author}{Johnson, S.G.},
  \bibinfo{author}{Hahn, T.}, \bibinfo{author}{Bouvier, A.},
  \bibinfo{author}{Ki{\^e}u, K.}, \bibinfo{year}{2025}.
\newblock \bibinfo{title}{cubature: Adaptive Multivariate Integration over
  Hypercubes}.
\newblock \DOIprefix\doi{10.32614/CRAN.package.cubature}. \bibinfo{note}{r
  package version 2.1.4-1, published 2025-11-22}.
\bibitem[{Nolan(2020)}]{nolan2020}
\bibinfo{author}{Nolan, J.P.}, \bibinfo{year}{2020}.
\newblock \bibinfo{title}{Univariate Stable Distributions: Models for Heavy
  Tailed Data}.
\newblock Springer Series in Operations Research and Financial Engineering,
  \bibinfo{publisher}{Springer International Publishing},
  \bibinfo{address}{Cham}.
\newblock \DOIprefix\doi{10.1007/978-3-030-52915-4}.
\bibitem[{Novak(2016)}]{novak2016}
\bibinfo{author}{Novak, E.}, \bibinfo{year}{2016}.
\newblock \bibinfo{title}{Some results on the complexity of numerical
  integration}, in: \bibinfo{editor}{Cools, R.}, \bibinfo{editor}{Nuyens, D.}
  (Eds.), \bibinfo{booktitle}{Monte Carlo and Quasi-Monte Carlo Methods:
  {MCQMC}, {L}euven, {B}elgium, {A}pril 2014}. \bibinfo{publisher}{Springer
  International Publishing}. volume \bibinfo{volume}{163} of
  \textit{\bibinfo{series}{Springer Proceedings in Mathematics \& Statistics}},
  pp. \bibinfo{pages}{161--183}.
\newblock \DOIprefix\doi{10.1007/978-3-319-33507-0_6}.
\bibitem[{Ogasawara(2025)}]{ogasawara2025}
\bibinfo{author}{Ogasawara, H.}, \bibinfo{year}{2025}.
\newblock \bibinfo{title}{The product distribution of more-than-two correlated
  normal variables: {A} didactic review with some new findings}.
\newblock \bibinfo{journal}{Communications in Statistics---Theory and Methods}
  \bibinfo{volume}{54}, \bibinfo{pages}{5354--5377}.
\newblock \DOIprefix\doi{10.1080/03610926.2024.2437487}.
\bibitem[{Pearl(2001)}]{pearl2001}
\bibinfo{author}{Pearl, J.}, \bibinfo{year}{2001}.
\newblock \bibinfo{title}{Direct and indirect effects}, in:
  \bibinfo{booktitle}{Proceedings of the Seventeenth Conference on Uncertainty
  in Artificial Intelligence (UAI)}, \bibinfo{publisher}{Morgan Kaufmann}. pp.
  \bibinfo{pages}{411--420}.
\bibitem[{Pearson et~al.(1929)Pearson, Jeffery and Elderton}]{pearson1929}
\bibinfo{author}{Pearson, K.}, \bibinfo{author}{Jeffery, G.B.},
  \bibinfo{author}{Elderton, E.M.}, \bibinfo{year}{1929}.
\newblock \bibinfo{title}{On the distribution of the first product
  moment-coefficient, in samples drawn from an indefinitely large normal
  population}.
\newblock \bibinfo{journal}{Biometrika} \bibinfo{volume}{21},
  \bibinfo{pages}{164--193}.
\newblock \DOIprefix\doi{10.1093/biomet/21.1-4.164}.
\bibitem[{Preacher and Hayes(2008)}]{preacher2008}
\bibinfo{author}{Preacher, K.J.}, \bibinfo{author}{Hayes, A.F.},
  \bibinfo{year}{2008}.
\newblock \bibinfo{title}{Asymptotic and resampling strategies for assessing
  and comparing indirect effects in multiple mediator models}.
\newblock \bibinfo{journal}{Behavior Research Methods} \bibinfo{volume}{40},
  \bibinfo{pages}{879--891}.
\newblock \DOIprefix\doi{10.3758/BRM.40.3.879}.
\bibitem[{Preacher and Kelley(2011)}]{preacher2011a}
\bibinfo{author}{Preacher, K.J.}, \bibinfo{author}{Kelley, K.},
  \bibinfo{year}{2011}.
\newblock \bibinfo{title}{Effect size measures for mediation models:
  Quantitative strategies for communicating indirect effects}.
\newblock \bibinfo{journal}{Psychological Methods} \bibinfo{volume}{16},
  \bibinfo{pages}{93--115}.
\newblock \DOIprefix\doi{10.1037/a0022658}.
\bibitem[{Robins and Greenland(1992)}]{robins1992}
\bibinfo{author}{Robins, J.M.}, \bibinfo{author}{Greenland, S.},
  \bibinfo{year}{1992}.
\newblock \bibinfo{title}{Identifiability and exchangeability for direct and
  indirect effects}.
\newblock \bibinfo{journal}{Epidemiology} \bibinfo{volume}{3},
  \bibinfo{pages}{143--155}.
\newblock \DOIprefix\doi{10.1097/00001648-199203000-00013}.
\bibitem[{Springer and Thompson(1966)}]{springer1966}
\bibinfo{author}{Springer, M.D.}, \bibinfo{author}{Thompson, W.E.},
  \bibinfo{year}{1966}.
\newblock \bibinfo{title}{The distribution of products of independent random
  variables}.
\newblock \bibinfo{journal}{SIAM Journal on Applied Mathematics}
  \bibinfo{volume}{14}, \bibinfo{pages}{511--526}.
\newblock \DOIprefix\doi{10.1137/0114046}.
\bibitem[{Taylor et~al.(2008)Taylor, MacKinnon and Tein}]{taylor2008}
\bibinfo{author}{Taylor, A.B.}, \bibinfo{author}{MacKinnon, D.P.},
  \bibinfo{author}{Tein, J.Y.}, \bibinfo{year}{2008}.
\newblock \bibinfo{title}{Tests of the three-path mediated effect}.
\newblock \bibinfo{journal}{Organizational Research Methods}
  \bibinfo{volume}{11}, \bibinfo{pages}{241--269}.
\newblock \DOIprefix\doi{10.1177/1094428107300344}.
\bibitem[{Tofighi and Kelley(2020)}]{tofighi2020}
\bibinfo{author}{Tofighi, D.}, \bibinfo{author}{Kelley, K.},
  \bibinfo{year}{2020}.
\newblock \bibinfo{title}{Indirect effects in sequential mediation models:
  Evaluating methods for hypothesis testing and confidence interval formation}.
\newblock \bibinfo{journal}{Multivariate Behavioral Research}
  \bibinfo{volume}{55}, \bibinfo{pages}{188--210}.
\newblock \DOIprefix\doi{10.1080/00273171.2019.1618545}.
\bibitem[{Tofighi and MacKinnon(2011)}]{tofighi2011}
\bibinfo{author}{Tofighi, D.}, \bibinfo{author}{MacKinnon, D.P.},
  \bibinfo{year}{2011}.
\newblock \bibinfo{title}{Rmediation: An {R} package for mediation analysis
  confidence intervals}.
\newblock \bibinfo{journal}{Behavior Research Methods} \bibinfo{volume}{43},
  \bibinfo{pages}{692--700}.
\newblock \DOIprefix\doi{10.3758/s13428-011-0076-x}.
\bibitem[{Valeri and VanderWeele(2013)}]{valeri2013}
\bibinfo{author}{Valeri, L.}, \bibinfo{author}{VanderWeele, T.J.},
  \bibinfo{year}{2013}.
\newblock \bibinfo{title}{Mediation analysis allowing for exposure--mediator
  interactions and causal interpretation: Theoretical assumptions and
  implementation with {SAS} and {SPSS} macros}.
\newblock \bibinfo{journal}{Psychological Methods} \bibinfo{volume}{18},
  \bibinfo{pages}{137--150}.
\newblock \DOIprefix\doi{10.1037/a0031034}.
\bibitem[{VanderWeele(2015)}]{vanderweele2015}
\bibinfo{author}{VanderWeele, T.J.}, \bibinfo{year}{2015}.
\newblock \bibinfo{title}{Explanation in Causal Inference: Methods for
  Mediation and Interaction}.
\newblock \bibinfo{publisher}{Oxford University Press}, \bibinfo{address}{New
  York, NY}.
\bibitem[{VanderWeele and Vansteelandt(2014)}]{vanderweele2014c}
\bibinfo{author}{VanderWeele, T.J.}, \bibinfo{author}{Vansteelandt, S.},
  \bibinfo{year}{2014}.
\newblock \bibinfo{title}{Mediation analysis with multiple mediators}.
\newblock \bibinfo{journal}{Epidemiologic Methods} \bibinfo{volume}{2},
  \bibinfo{pages}{95--115}.
\newblock \DOIprefix\doi{10.1515/em-2012-0010}.
\bibitem[{Vinokur and Schul(1997)}]{vinokur1997}
\bibinfo{author}{Vinokur, A.D.}, \bibinfo{author}{Schul, Y.},
  \bibinfo{year}{1997}.
\newblock \bibinfo{title}{Mastery and inoculation against setbacks as active
  ingredients in the {JOBS} intervention for the unemployed}.
\newblock \bibinfo{journal}{Journal of Consulting and Clinical Psychology}
  \bibinfo{volume}{65}, \bibinfo{pages}{867--877}.
\newblock \DOIprefix\doi{10.1037/0022-006X.65.5.867}.
\bibitem[{Wishart and Bartlett(1932)}]{wishart1932}
\bibinfo{author}{Wishart, J.}, \bibinfo{author}{Bartlett, M.S.},
  \bibinfo{year}{1932}.
\newblock \bibinfo{title}{The distribution of second order moment statistics in
  a normal system}.
\newblock \bibinfo{journal}{Mathematical Proceedings of the Cambridge
  Philosophical Society} \bibinfo{volume}{28}, \bibinfo{pages}{455--459}.
\newblock \DOIprefix\doi{10.1017/S0305004100010690}.
\bibitem[{Zheng et~al.(2012)Zheng, Wei, Hamalainen and Tirkkonen}]{zheng2012}
\bibinfo{author}{Zheng, Z.}, \bibinfo{author}{Wei, L.},
  \bibinfo{author}{Hamalainen, J.}, \bibinfo{author}{Tirkkonen, O.},
  \bibinfo{year}{2012}.
\newblock \bibinfo{title}{Approximation to distribution of product of random
  variables using orthogonal polynomials for lognormal density}.
\newblock \bibinfo{journal}{IEEE Communications Letters} \bibinfo{volume}{16},
  \bibinfo{pages}{2028--2031}.
\newblock \DOIprefix\doi{10.1109/lcomm.2012.101712.122141}.

\end{thebibliography}

\end{document}